\documentclass[journal, onecolumn, draftcls,12pt]{IEEEtran}
\usepackage{amsfonts}
\usepackage{amssymb}
\usepackage{amsmath}
\usepackage{amsthm}

\usepackage{algorithm}
\usepackage{algpseudocode}
\usepackage {graphicx}
\usepackage{bigstrut} 
\usepackage{multirow}
\usepackage{url}
\usepackage{cite}
\usepackage{framed}
\usepackage{subfigure}

\newtheorem{thm}{Theorem}
\newtheorem{lem}{Lemma}
\newtheorem{prop}{Proposition}

\DeclareMathOperator*{\esupa}{arg\,max}
\DeclareMathOperator*{\esupmin}{min}
\DeclareMathOperator*{\esupmax}{max}

\def\x{{\mathbf x}}
\def\X{{\mathbf X}}

\def\b{{\mathbf b}}
\def\B{{\mathbf B}}
\def\w{{\mathbf w}}
\def\W{{\mathbf W}}

\def\V{{\mathbf V}}

\def\e{{\mathbf e}}

\def\0{{\mathbf 0}}
\def\A{{\mathbf A}}

\def\I{{\mathbf I}}

\def\T{{\mathcal T}}
\def\TColon{{\T,\,:}}
\def\THatColon{{\THat,\,:}}
\def\THatDeltaColon{{\tDeltaHat,\,:}}
\def\THatiColon{{\TiHat,\,:}}

\def\GammaColon{{\Gamma,\,:}}
\def\GammaCompColon{{\Gamma^c,\,:}}

\def\THatCompColon{{\THat^c,\,:}}
\def\THatiCompColon{{\THat_i^c,\,:}}

\def\R{{\mathbb R}}
\def\E{{\mathbb E}}
\def\THat{{\hat\T}}

\def\XHat{{\hat\X}}

\def\TiHat{{\hat{\T_i}}}
\def\tDeltaHat{{\hat\T\hspace{-0.11cm}_\Delta}}
\def\deltaRK{\delta_{R+K}}

\newcommand{\vecNorm}[1]{\left\lVert#1\right\rVert}
\newcommand{\vecLTwoNorm}[1]{{\left\lVert#1\right\rVert}_2}

\voffset 20pt

\begin{document}
\title{Fusion of Sparse Reconstruction Algorithms for Multiple Measurement Vectors}
\author{\IEEEauthorblockN{Deepa K G, Sooraj K.~Ambat, and {K.V.S.~Hari}} \\ 
	\thanks{Deepa K G, Sooraj K.~Ambat, and K.V.S.~Hari are with Statistical Signal Processing Lab, Department of Electrical Communication Engineering, Indian Institute of Science, Bangalore, 560012, India. (email:\{deepa,sooraj,hari\}@ece.iisc.ernet.in)}
}
	\maketitle
	\begin{abstract}
		We consider the recovery of sparse signals that share a common support from multiple measurement vectors. The performance of several algorithms developed for this task depends on parameters like dimension of the sparse signal, dimension of measurement vector, sparsity level, measurement noise. We propose a fusion framework, where several multiple measurement vector reconstruction algorithms participate and the final signal estimate is obtained by combining the signal estimates of the participating algorithms. We present the conditions for achieving a better reconstruction performance than the participating algorithms. Numerical simulations demonstrate that the proposed fusion algorithm often performs better than the participating algorithms.
		\end{abstract}
\begin{keywords}
	Compressed Sensing, Fusion, Sparse Signal Reconstruction, Multiple Measurement Vectors
\end{keywords}
\section{Introduction}
\label{sec:intro}
Consider the standard Compressed Sensing (CS) measurement setup where a $K$-sparse signal $\x \in \R^{N\times 1}$ is acquired through $M$ linear measurements via 
\begin{align}
\label{eqn:CS_Std_Eqn}
\b = \A \x + \w,
\end{align}
where $\A \in \R^{M\times N}$ denotes the measurement matrix, $\b \in \R^{M\times 1}$ represents the measurement vector, and $\w \in \R^{M \times 1}$ denotes the additive measurement noise present in the system. The reconstruction problem, estimating $\x$ from \eqref{eqn:CS_Std_Eqn} using $\A$ and $\b$, is known as Single Measurement Vector (SMV) problem. In this work, we consider the Multiple Measurement Vector (MMV) problem \cite{Cotter2005SparseSolution} where we have $L$ measurements:  $\mathbf{b}^{(1)}=\mathbf{A}\mathbf{x}^{(1)} + \w^{(1)}$, $\mathbf{b}^{(2)}=\mathbf{A}\mathbf{x}^{(2)} + \w^{(2)}$, $\cdots$, $\mathbf{b}^{(L)}=\mathbf{A}\mathbf{x}^{(L)} + \w^{(L)}$. The vectors $\{\mathbf{x}^{(l)}\}_{l=1}^{L}$ are assumed to have a common sparse support-set. The problem is to estimate $\mathbf{x}^{(l)} \, (l=1,2,\dots, L)$. Instead of recovering the $L$ signals individually, the attempt in the MMV problem is to simultaneously recover all the $L$ signals. MMV problem arises in many applications such as the neuromagnetic inverse problem in Magnetoencephalography (a modality for imaging the brain) \cite{Gorodnitsky1995Neuromagnetic,Gorodnitsky1997SparseSignal}, array processing \cite{Malioutov2005ASparseSignal}, non-parametric spectrum analysis of time series \cite{Stoica1997Introduction}, and equalization of sparse communication channels \cite{Cotter2002SparseChannel}.
\par
Recently many algorithms have been proposed to recover signal vectors with a common sparse support. Some among them are algorithms based on diversity minimization methods like $\ell _{2,1}$ minimization \cite{Tropp2006Algorithms}, and M-FOCUSS \cite{Cotter2005SparseSolution}, greedy methods like M-OMP and M-ORMP \cite{Cotter2005SparseSolution}, and Bayesian methods like MSBL~\cite{Wipf2007AnEmpirical} and T-MSBL~\cite{Zhang11SparseSignal}. 
\par
However it has been observed that the performance of many algorithms depends on many parameters like the dimension of the measurement vector, the sparsity level, the statistical distribution of the non-zero elements of the signal, the measurement noise power etc.~\cite{Zhang11SparseSignal}. Thus it becomes difficult to choose the best sparse reconstruction algorithm without {\em a priori} knowledge about these parameters. 
\par
Suppose we have the sparse signal estimates given by various algorithms. It may be possible to merge these estimates to form a more accurate estimate of the original. This idea of fusion of multiple estimators has been proposed in the context of signal denoising in \cite{Elad2009APlurality} where fusion was performed by plain averaging. Recently, Ambat {\em et al.}~\cite{facs,Ambat2013ICASSP,Ambat2014ACommitteMachine,Ambat2014ProgressiveFusion,Ambat2012FuGP_EUSIPCO} proposed fusion of the estimates of sparse reconstruction algorithms to improve the sparse signal reconstruction performance of SMV problem.
\par
	In this paper, we propose a framework which uses several MMV reconstruction algorithms and combines their sparse signal support estimates to determine the final signal estimate. We refer to this scheme as \textit{MMV-Fusion of Algorithms for Compressed Sensing} (MMV-FACS). We present an upper bound on the reconstruction error by MMV-FACS. We also present a sufficient condition for achieving a better reconstruction performance than any participating algorithm. By Monte-Carlo simulations we show that fusion of viable algorithms leads to improved reconstruction performance for the MMV problem.
	
\subsection*{Notations:} Matrices and vectors are denoted by bold upper case and bold lower case letters respectively. Sets are represented by upper case Greek alphabets and calligraphic letters. $\mathbf{A}_\mathcal{T}$ denotes the column sub-matrix of $\mathbf{A}$ where the indices of the columns are the elements of the set $\mathcal{T}$. $\mathbf{X}_{\mathcal{T},\,:}$  denotes the sub-matrix formed by those rows of $\mathbf{X}$ whose indices are listed in the set $\mathcal{T}$. $\mathbf{X}^{K}$ is the matrix obtained from $\mathbf{X}$ by keeping its $K$ rows with the largest $\ell_2$-norm and by setting all other rows to zero, breaking ties lexicographically. \texttt{supp}($\mathbf{X}$) denotes the set of indices of non-zero rows of $\mathbf{X}$. For a matrix $\mathbf{X}$, $\mathbf{x}^{(l)}$ denotes the $\ell^\text{th}$ column vector of $\mathbf{X}$. ${\hat{\mathbf{X}}}_i$ denotes the reconstructed matrix by the $i^\text{th}$ participating algorithm. The complement of the set $\mathcal{T}$ with respect to the set $\{1,2, \ldots , N\}$ is denoted by $\mathcal{T}^c$. For two sets $\mathcal{T}_1$ and $\mathcal{T}_2$, $\mathcal{T}_1  \setminus \mathcal{T}_2 = \mathcal{T}_1 \cap \mathcal{T}_2^c$ denotes the set difference. $|\mathcal{T}|$ denotes the cardinality of set $\mathcal{T}$. $\mathbf{A}^ \dagger$ and $\mathbf{A}^T$ denote the pseudo-inverse and transpose of matrix $\mathbf{A}$, respectively. The $(p,q)$ mixed norm of the matrix $\mathbf{X}$ is defined as 
			\begin{equation*}
			\left \|\mathbf{X}\right \|_{(p,q)}=\left(\sum _{i} \left \|\mathbf{X}_{i,:}\right \|_p^q \right)^{1/q}
			\end{equation*}
			The Frobenius norm of matrix $\mathbf{A}$ is denoted as $\left \|\mathbf{A}\right \|_F$. 
		
\section{PROBLEM FORMULATION} 
The MMV problem involves solving the following $L$ under-determined systems of linear equations
\begin{align}
	\label{Eqn:eqmmv1}
\mathbf{b}^{(l)} = \mathbf{Ax}^{(l)} + \mathbf{w}^{(l)}, \quad l=1,2,3, \dots ,L 
\end{align}
where  $ \mathbf{A} \in \mathbb{R}^{M\times N} \, (M \ll N)$ represents the measurement matrix, $\mathbf{b}^{(l)} \in \mathbb{R}^{M\times 1}$  represents the $l^\text{th}$ measurement vector, and $\mathbf{x}^{(l)} \in \mathbb{R}^{N\times 1}$ denotes the corresponding $K$-sparse source vector. That is,  $|\texttt{supp}(\mathbf{x}^{(l)})| \leq K$ and $\x^{(l)}$ share a common support-set for $l=1,2,\dots,L$. $\mathbf{w}^{(l)} \in \mathbb{R}^{M\times 1}$  represents the additive measurement noise. We can rewrite \eqref{Eqn:eqmmv1} as
\begin{align}
	\label{Eqn:mmvmatrix}
\mathbf{B} = \mathbf{AX}+\mathbf{W} 
\end{align}
where $\mathbf{X}=[\mathbf{x}^{(1)}, \mathbf{x}^{(2)}, \dots , \mathbf{x}^{(L)}]$, $\mathbf{W}=[\mathbf{w}^{(1)}, \mathbf{w}^{(2)}, \dots , \mathbf{w}^{(L)}]$, and $\mathbf{B}=[\mathbf{b}^{(1)}, \mathbf{b}^{(2)}, \dots , \mathbf{b}^{(L)}]$. {For a matrix $\mathbf{X}$, we define
	\begin{align*}
	\texttt{supp}(\mathbf{X})= \bigcup_{i=1}^L \texttt{supp}(\mathbf{x}^i).
	\end{align*}
	In \eqref{Eqn:mmvmatrix}, we assume that $\X$ is jointly $K$-sparse. That is, $|\texttt{supp}(\mathbf{X})| \leq K $. There are at most $K$ rows in $\mathbf{X}$ that contain non-zero elements. We assume that $K < M$ and $K$ is known {\em a priori}.
	
	\section{Fusion of Algorithms for Multiple Measurement Vector Problem}
	\label{Sec:MMV_FACS_Problem}
	In this paper, we propose to employ multiple sparse reconstructions algorithms independently for estimating $\X$ from \eqref{Eqn:mmvmatrix} and fuse the resultant estimates to  yield a better sparse signal estimate. 
	Let $P \geq 2$ denote the number of different participating algorithms employed to estimate the sparse signal. Let $\hat{\mathcal{T}}_i$ denote the support-set estimated by the $i^{\text{th}}$ participating algorithm and let $\mathcal{T}$ denote the true-support-set. Denote the union of the estimated support-sets as $\Gamma$, i.e., $\Gamma \triangleq \cup_{i=1}^P \hat{\mathcal{T}}_i$, assume that $R \triangleq |\Gamma| \leq M$. We hope that different participating algorithms work on different principles and the support-set estimated by each participating algorithm includes a partially correct information about the true support-set $\T$. It may be also observed that the union of the estimated support-sets, $\Gamma$, is richer in terms of the true atoms as compared to the support-set estimated by any participating algorithm. Also note that, once the support-set is estimated, the non-zero magnitudes of $\X$ can be estimated by solving a Least-Squares (LS) problem on an over-determined system of linear equations. Hence if we can identify all the true atoms included in the joint support-set $\Gamma$, we can achieve a better sparse signal estimate. \par
Since we are estimating the support atoms only from $\Gamma$, we need to only solve the following problem which is lower dimensional as compared to the original problem \eqref{Eqn:mmvmatrix}:
\begin{align}
	\label{eqn:MMV_ApproxCS}
	\mathbf{B} = \mathbf{A}_{\Gamma} \mathbf{X}_{\Gamma,:} + \mathbf{\tilde W}, 
\end{align}
where $\A_{\Gamma}$ denotes the sub-matrix formed by the columns of $\A$ whose indices are listed in $\Gamma$, $\X_{\Gamma,:}$ denotes the submatrix formed by
the rows of $\X$ whose indices are listed in $\Gamma$, and $\mathbf{\tilde W} = \mathbf{W} + \A_{\Gamma^c} \X_{\Gamma^c, :}$. The matrix equation  \eqref{eqn:MMV_ApproxCS} represents a system of $L$ linear equations which are over-determined in nature. We use the method of {LS} to find an approximate solution to the overdetermined system of equations in \eqref{eqn:MMV_ApproxCS}. Let $\mathbf{V}_\GammaColon$ denote the {LS} solution of \eqref{eqn:MMV_ApproxCS}. We choose the support-set estimate of {MMV-FACS} as the support of $\mathbf{V}^K$, i.e., indices of those rows having the largest $\ell _2$-norm. Once the non-zero rows are identified, solving the resultant overdetermined solution using {LS} we can estimate the non-zero entries of $\hat{\mathbf{X}}$. {MMV-FACS} is summarized in Algorithm~\ref{Alg:MMV-FACS}.

\begin{algorithm}
	\caption[MMV Fusion of Algorithms for Compressed Sensing (FACS)]{: \hspace{1mm} {MMV-FACS}}
	\label{Alg:MMV-FACS}
	\begin{flushleft}
		{\bf Inputs:} $\mathbf{A} \in \R^{M \times N}$, $\mathbf{B} \in \R^{M \times
			L}$, $K$, and $\left\{ \hat\T_i \right\}_{i=1:P}	$. \\ 
		{\bf Assumption:} $|\displaystyle \mathop{\cup}_{i=1}^P$  $\THat_i| \le M$.  \\
		{\bf Initialization:} $\V = \0 \in \R^{N \times 1}$.
	\end{flushleft}
	{\bf Fusion:}
	\begin{algorithmic}[1]
		\State $\Gamma$ =         $\displaystyle \mathop{\cup}_{i=1}^P$  $\THat_i$; \hspace{1cm}
		\label{AlgStep:Gamma_MMV}
		\State $\V_\GammaColon$ = $\A_{\Gamma}^ \dag \B$,  $\V_\GammaCompColon = \0$;
		\label{AlgStep:v_Gamma_MMV}
		\State $\hat\T$ = $\mathtt{supp}(\V^K)$;  
		\label{AlgStep:T_MMV}
	\end{algorithmic}
	{\bf Outputs:} $\hat{\mathcal{T}}$ and $\hat{\mathbf{X}}$ (where $\hat{\mathbf{X}}_\THatColon = \mathbf{A}_\THatColon^\dagger \mathbf{B}$ and  $\hat{\mathbf{X}}_\THatCompColon=\mathbf{0})$
\end{algorithm}

\subsection*{Remark:} 
An alternate approach for solving an {MMV} problem is to stack all the columns of $\B$ to get a single measurement vector. Then \eqref{Eqn:mmvmatrix} in a noiseless case  becomes
\begin{align*}
\begin{bmatrix}\b_1\\ \b_2 \\ \vdots\\ \b_L\end{bmatrix}_{ML\times 1} = \left[
\begin{array}{ccccc}
\A\\
& \A & & \text{\huge0}\\
& & \ddots\\
& \text{\huge0} & & \A\\
& & & & \A
\end{array}
\right]_{ML\times NL}  \begin{bmatrix}\x_1\\ \x_2 \\ \vdots\\ \x_L\end{bmatrix}_{NL\times 1},
\end{align*} 
where $\b_i$  and $\x_i$ ($i=1,2,\dots, L$) denote the $i^\text{th}$ column of $\B$ and $\X$ respectively. 
Now, we have the following SMV problem.
\begin{align}
\begin{bmatrix}b_1\\ b_2 \\ \vdots\\ b_{NL}\end{bmatrix}_{ML\times 1} = \left[
\begin{array}{ccccc}
\A\\
& \A & & \text{\huge0}\\
& & \ddots\\
& \text{\huge0} & & \A\\
& & & & \A
\end{array}
\right]_{ML\times NL}  \begin{bmatrix} x_1\\ x_2 \\ \vdots\\ x_{NL}\end{bmatrix}_{NL\times 1}
\label{eqn:SMV}
\end{align}
In principle, we can solve \eqref{eqn:SMV} using {FACS} with sparsity level $LK$. Note that, after stacking $\X$ column-wise, we lost the joint sparsity constraint imposed on $\X$ in the MMV problem in \eqref{Eqn:mmvmatrix}. The $LK$ non-zero elements estimated from \eqref{eqn:SMV} using {FACS} can be from  more than $K$ different rows of $\X$. In the worst case, the estimate of FACS may include non-zero elements from $\min(LK, M)$ different rows of $\X$. Then we will end up with an estimate of $\X$ with $LK$ non-zero rows, which is highly undesirable. Hence stacking the columns of the observation matrix $\B$ and solving it using {FACS} is not advisable. Note that Step~3 in Algorithm~\ref{Alg:MMV-FACS} ensures that MMV-FACS estimates only $K$ non-zero rows of $\X$.

\section[Theoretical Studies of {MMV-FACS}]{Theoretical Studies of {MMV-FACS}} 
In this section, we will theoretically analyse the performance of {MMV-FACS}. We consider the general case for an arbitrary signal matrix. We also study the average case performance of {MMV-FACS} subsequently. 
\par
The performance analysis is characterized by {SRER} extended for {MMV} which is defined as 
\begin{equation}
\text{{SRER}}\triangleq \frac{\left \|\mathbf{X}\right \|_F^2}{\left \|\mathbf{X}-\hat{\mathbf{X}}\right \|_F^2}, 
\end{equation}
where $\mathbf{X}$ and $\hat{\mathbf{X}}$ denote the actual and reconstructed signal matrix respectively. 
\begin{lem}
	\label{lem1_MMVFACS}
	Suppose that A satisfies the relation, for some constant $\deltaRK \in (0,1)$,
	\begin{align*}
	&\left \|\mathbf{AX}\right \|_F \leq \sqrt{1+\delta _{R+K}} \left \|\mathbf{X}\right \|_F, 
	\end{align*}
	where $\left \|\mathbf{X}\right \|_0 \leq R+K$ and $\deltaRK \in (0,1)$. Here $\left \|\mathbf{X}\right \|_0$ denotes the number of non-zero rows of the matrix $\mathbf{X}$. Then, for every matrix $\mathbf{X}$,
	\begin{equation*}
	\left \|\mathbf{AX}\right \|_F \leq \sqrt{1+\delta _{R+K}} \Biggl[ \left \|\mathbf{X}\right \|_F + \frac{1}{\sqrt{R+K}} \left \|\mathbf{X}\right \|_{2,1}\Biggr]
	\end{equation*}
\end{lem}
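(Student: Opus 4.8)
The plan is to adapt the classical ``consequence of the restricted isometry property'' argument used in the vector/SMV setting to the row-sparse matrix case, by partitioning the rows of $\mathbf{X}$ into blocks of size $R+K$ sorted by decreasing $\ell_2$-norm. Set $s \triangleq R+K$. First I would re-index the rows of $\mathbf{X}$ so that $\left\|\mathbf{X}_{i_1,:}\right\|_2 \ge \left\|\mathbf{X}_{i_2,:}\right\|_2 \ge \cdots$, and group the ordered indices into consecutive blocks $\Lambda_0,\Lambda_1,\Lambda_2,\ldots$, each of size $s$ (the last one possibly smaller). Then $\mathbf{X} = \sum_{j\ge 0}\mathbf{X}_{\Lambda_j,:}$ is a finite sum since $\mathbf{X}$ has finitely many nonzero rows, and each $\mathbf{X}_{\Lambda_j,:}$ has at most $s=R+K$ nonzero rows, so the hypothesis of the lemma applies to every block.

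By the triangle inequality for $\left\|\cdot\right\|_F$ together with the hypothesis applied blockwise,
\begin{equation*}
\left\|\mathbf{AX}\right\|_F \le \sum_{j\ge 0}\left\|\mathbf{A}\mathbf{X}_{\Lambda_j,:}\right\|_F \le \sqrt{1+\deltaRK}\sum_{j\ge 0}\left\|\mathbf{X}_{\Lambda_j,:}\right\|_F .
\end{equation*}
The $j=0$ term is bounded by $\left\|\mathbf{X}\right\|_F$. For the tail I would exploit the sorting: every row index in $\Lambda_j$ with $j\ge 1$ has $\ell_2$-norm no larger than the \emph{average} of the $\ell_2$-norms of the rows in $\Lambda_{j-1}$, i.e. $\max_{i\in\Lambda_j}\left\|\mathbf{X}_{i,:}\right\|_2 \le \tfrac{1}{s}\left\|\mathbf{X}_{\Lambda_{j-1},:}\right\|_{2,1}$. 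Squaring this, summing over the at most $s$ rows of $\Lambda_j$, and taking square roots gives $\left\|\mathbf{X}_{\Lambda_j,:}\right\|_F \le \tfrac{1}{\sqrt{s}}\left\|\mathbf{X}_{\Lambda_{j-1},:}\right\|_{2,1}$.

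Summing this estimate over $j\ge 1$ telescopes the $(2,1)$-norms of the blocks back into $\left\|\mathbf{X}\right\|_{2,1}$:
\begin{equation*}
\sum_{j\ge 1}\left\|\mathbf{X}_{\Lambda_j,:}\right\|_F \le \frac{1}{\sqrt{s}}\sum_{j\ge 0}\left\|\mathbf{X}_{\Lambda_j,:}\right\|_{2,1} = \frac{1}{\sqrt{R+K}}\left\|\mathbf{X}\right\|_{2,1},
\end{equation*}
and combining the three displays yields the stated bound. The only genuine obstacle is the blockwise comparison $\left\|\mathbf{X}_{\Lambda_j,:}\right\|_F \le \tfrac{1}{\sqrt{s}}\left\|\mathbf{X}_{\Lambda_{j-1},:}\right\|_{2,1}$: it relies on sorting the rows by $\ell_2$-norm so that the largest row-norm in a block is dominated by the mean row-norm of the preceding block, which is precisely the point where the joint (row) sparsity and the mixed $(2,1)$-norm play the role that scalar sparsity and the $\ell_1$-norm play in the vector case. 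The remaining ingredients — finiteness of the sum, the triangle inequality for $\left\|\cdot\right\|_F$, and invoking the hypothesis on each block — are routine.
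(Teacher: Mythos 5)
Your proof is correct and takes essentially the same route as the paper's: both sort the rows by $\ell_2$-norm, partition them into blocks of size $R+K$, and use the key comparison that the largest row-norm in a block is dominated by the average row-norm of the preceding block, which is exactly the paper's inequality $\lambda_j \le \sqrt{R+K}\,\|\mathbf{X}|_{I_j}\|_{2,\infty} \le \tfrac{1}{\sqrt{R+K}}\|\mathbf{X}|_{I_{j-1}}\|_{2,1}$. The only cosmetic difference is that you apply the triangle inequality to $\|\mathbf{AX}\|_F$ directly, whereas the paper packages the same computation as a convex-hull containment $Q \subset S$ in the style of Needell and Tropp; the two presentations are interchangeable.
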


\noindent \textit{Proof}: Proof is given in \appendixname~\ref{app:appendix1}.\\

\begin{lem} 
	\label{lem2_MMVFACS}
	Consider $\mathbf{A} \in \mathbb{R}^{M\times N}$ and let $\mathcal{T}_1$ \& $\mathcal{T}_2$ be two subsets of $\{1,2, \ldots N\}$ such that $\mathcal{T}_1 \cap \mathcal{T}_2 = \emptyset $. Assume that $\delta _{|\mathcal{T}_1|+|\mathcal{T}_2|} \leq 1$ and let $\mathbf{Y}$ be any matrix, such that $span(\mathbf{Y}) \in span(\mathbf{A}_{\mathcal{T}_1})$ and $\mathbf{R}=\mathbf{Y}-\mathbf{A}_{\mathcal{T}_2}\mathbf{A}_{\mathcal{T}_2}^\dagger \mathbf{Y}$. Then we have 
	\begin{equation*}
	\left(1-\frac{\delta _{|\mathcal{T}_1|+|\mathcal{T}_2|}}{1-\delta _{|\mathcal{T}_1|+|\mathcal{T}_2|}} \right) \left \|\mathbf{Y}\right \|_2 \leq \left \|\mathbf{R}\right \|_2 \leq \left \|\mathbf{Y}\right \|_2.
	\end{equation*}
\end{lem}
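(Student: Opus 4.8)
The plan is to split the claim into its two inequalities: the right-hand one is essentially automatic, and the left-hand one carries all the content. Write $\mathbf{P}\triangleq\mathbf{A}_{\mathcal{T}_2}\mathbf{A}_{\mathcal{T}_2}^\dagger$ for the orthogonal projector onto the column space of $\mathbf{A}_{\mathcal{T}_2}$ (well defined once $\delta_{|\mathcal{T}_1|+|\mathcal{T}_2|}<1$, so that $\mathbf{A}_{\mathcal{T}_2}$ has full column rank), so that $\mathbf{R}=(\mathbf{I}-\mathbf{P})\mathbf{Y}$ and $\mathbf{Y}=\mathbf{R}+\mathbf{P}\mathbf{Y}$ is an orthogonal decomposition. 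Then $\left\|\mathbf{Y}\right\|_2^2=\left\|\mathbf{R}\right\|_2^2+\left\|\mathbf{P}\mathbf{Y}\right\|_2^2$, which already gives $\left\|\mathbf{R}\right\|_2\le\left\|\mathbf{Y}\right\|_2$, and it reduces the lower bound to proving $\left\|\mathbf{P}\mathbf{Y}\right\|_2\le\frac{\delta}{1-\delta}\left\|\mathbf{Y}\right\|_2$, where $\delta\triangleq\delta_{|\mathcal{T}_1|+|\mathcal{T}_2|}$; indeed, granting this, the triangle inequality yields $\left\|\mathbf{R}\right\|_2\ge\left\|\mathbf{Y}\right\|_2-\left\|\mathbf{P}\mathbf{Y}\right\|_2\ge\left(1-\frac{\delta}{1-\delta}\right)\left\|\mathbf{Y}\right\|_2$.

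To bound $\left\|\mathbf{P}\mathbf{Y}\right\|_2$ I would use the hypothesis $\mathrm{span}(\mathbf{Y})\subseteq\mathrm{span}(\mathbf{A}_{\mathcal{T}_1})$ to write $\mathbf{Y}=\mathbf{A}_{\mathcal{T}_1}\mathbf{C}$, and set $\mathbf{D}\triangleq\mathbf{A}_{\mathcal{T}_2}^\dagger\mathbf{Y}$ so that $\mathbf{P}\mathbf{Y}=\mathbf{A}_{\mathcal{T}_2}\mathbf{D}$. Since $\mathbf{R}=\mathbf{Y}-\mathbf{P}\mathbf{Y}$ is orthogonal to the column space of $\mathbf{A}_{\mathcal{T}_2}$, working with the Frobenius inner product $\langle\mathbf{U},\mathbf{Z}\rangle=\mathrm{trace}(\mathbf{U}^T\mathbf{Z})$ gives
\begin{align*}
\left\|\mathbf{P}\mathbf{Y}\right\|_2^2 &= \langle\mathbf{P}\mathbf{Y},\,\mathbf{Y}\rangle = \langle\mathbf{A}_{\mathcal{T}_2}\mathbf{D},\,\mathbf{A}_{\mathcal{T}_1}\mathbf{C}\rangle \\
&= \langle\mathbf{D},\,\mathbf{A}_{\mathcal{T}_2}^T\mathbf{A}_{\mathcal{T}_1}\mathbf{C}\rangle \le \left\|\mathbf{D}\right\|_2\,\left\|\mathbf{A}_{\mathcal{T}_2}^T\mathbf{A}_{\mathcal{T}_1}\mathbf{C}\right\|_2 .
\end{align*}
Three standard restricted-isometry estimates then close the loop: (i) the disjoint-support near-orthogonality bound $\left\|\mathbf{A}_{\mathcal{T}_2}^T\mathbf{A}_{\mathcal{T}_1}\mathbf{C}\right\|_2\le\delta\left\|\mathbf{C}\right\|_2$; (ii) the lower isometry bound $\left\|\mathbf{P}\mathbf{Y}\right\|_2=\left\|\mathbf{A}_{\mathcal{T}_2}\mathbf{D}\right\|_2\ge\sqrt{1-\delta}\,\left\|\mathbf{D}\right\|_2$, that is, $\left\|\mathbf{D}\right\|_2\le\left\|\mathbf{P}\mathbf{Y}\right\|_2/\sqrt{1-\delta}$; and (iii) likewise $\left\|\mathbf{Y}\right\|_2=\left\|\mathbf{A}_{\mathcal{T}_1}\mathbf{C}\right\|_2\ge\sqrt{1-\delta}\,\left\|\mathbf{C}\right\|_2$. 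Substituting (i)--(iii) into the display yields $\left\|\mathbf{P}\mathbf{Y}\right\|_2^2\le\frac{\delta}{1-\delta}\left\|\mathbf{P}\mathbf{Y}\right\|_2\left\|\mathbf{Y}\right\|_2$; dividing by $\left\|\mathbf{P}\mathbf{Y}\right\|_2$ (the case $\mathbf{P}\mathbf{Y}=\mathbf{0}$ being trivial) gives exactly the required bound.

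The norm manipulations above are routine; the step that needs genuine care is (i), the near-orthogonality of columns indexed by the disjoint sets $\mathcal{T}_1,\mathcal{T}_2$, since this is a consequence of the (two-sided) restricted isometry property and not of the single inequality recorded in the hypothesis of Lemma~\ref{lem1_MMVFACS} alone; I would either invoke the standard MMV restricted-isometry lemma for disjoint index sets or derive (i) from the two-sided property applied to matrices whose nonzero rows lie in $\mathcal{T}_1\cup\mathcal{T}_2$. The lower bounds in (ii)--(iii) likewise use the lower isometry inequality, and it is here that the hypothesis has to be read as $\delta_{|\mathcal{T}_1|+|\mathcal{T}_2|}<1$ (not merely $\le 1$) for the pseudo-inverse and the factor $1/(1-\delta)$ to be meaningful. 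As a sanity check on notation: since the argument runs through the Frobenius inner product, $\left\|\cdot\right\|_2$ in the statement should be read as the Frobenius norm; alternatively, the matrix inequality follows from its known vector version applied column by column, each column of $\mathbf{Y}$ lying in $\mathrm{span}(\mathbf{A}_{\mathcal{T}_1})$.
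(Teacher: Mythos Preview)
Your proof is correct. The paper, however, takes a much shorter route: it simply invokes the known vector version of this projection lemma (Lemma~2 in Dai--Milenkovic's Subspace Pursuit paper) for each column $\mathbf{y}^{(i)}$ of $\mathbf{Y}$, obtaining
\[
\Bigl(1-\tfrac{\delta}{1-\delta}\Bigr)\bigl\|\mathbf{y}^{(i)}\bigr\|_2\le\bigl\|\mathbf{r}^{(i)}\bigr\|_2\le\bigl\|\mathbf{y}^{(i)}\bigr\|_2,
\]
squares and sums over $i$, and reads off the Frobenius-norm inequality. This is exactly the ``alternatively\dots'' remark you make in your last sentence. What you do instead is rederive the Dai--Milenkovic projection lemma from scratch at the matrix level, using the Frobenius inner product and the three RIP ingredients (near-orthogonality on disjoint supports, and the lower isometry bound twice). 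Your approach is more self-contained and makes explicit where the two-sided RIP and the strict inequality $\delta<1$ enter; the paper's approach is terser but relies on the external citation. Your observation that $\|\cdot\|_2$ in the statement should be read as the Frobenius norm is also on target: the paper's own appendix concludes with $\|\cdot\|_F$, so the $\|\cdot\|_2$ in the lemma statement is a typo.
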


\noindent \textit{Proof}: Proof is given in Appendix~\ref{app:appendix2}.\\
\subsection[Performance Analysis]{Performance Analysis for Arbitrary Signals under Measurement Perturbations}
We analyse the performance of {MMV-FACS} for arbitrary signals and give an upper bound on the reconstruction error in Theorem~\ref{thm:MainMMVFACS}. We also derive a sufficient condition to get an improved performance of {MMV-FACS} scheme over any given participating algorithm. 

\begin{thm}
	\label{thm:MainMMVFACS}
	Let $\mathbf{X}$ be an arbitrary signal with $\T = \texttt{supp}(\mathbf{X}^{K})$. Consider the {MMV-FACS} setup discussed in Section~\ref{Sec:MMV_FACS_Problem}, and assume that the measurement matrix \textbf{A} satisfies {RIP} with {RIC} $\delta _{R+K}$. We have the following results:
	\begin{enumerate}
		\item Upper bound on reconstruction error: We have, 
		\begin{align*}
		\left \|\mathbf{X}-\hat{\mathbf{X}} \right \|_F \leq C_1 \left \|\mathbf{X} -\mathbf{X}^{K} \right \|_F + C_2 \left \|\mathbf{X} -\mathbf{X}^{K} \right \|_{2,1} + C_3 \left \|\mathbf{X}_\GammaCompColon\right \|_F  + \nu \left \|\mathbf{W}\right \|_F
		\end{align*}
		where $C_1=\left(1+\nu \sqrt{1+\delta _{R+K}}\right)$, $C_2= \dfrac{\nu \sqrt{1+\delta _{R+K}}}{\sqrt{R+K}}$, $C_3= \dfrac{1+\delta _{R+K}}{(1-\delta _{R+K})^2}$, and $\nu = \dfrac{3-\delta _{R+K}}{(1-\delta _{R+K})^2}$.
		\item {SRER} gain: \\
		For $\left \|\mathbf{X}_\THatiCompColon \right \|_F \neq 0$ and $\left \|\mathbf{X}_\GammaCompColon\right \|_F \neq 0$, 
		{MMV-FACS} provides at least {SRER} gain of \\$\left( \dfrac{(1-\delta_{R+K})^2}{(1+\delta_{R+K}+3\zeta+3\xi) \eta_i} \right)^2$ over the $i^{\text{th}}$ participating algorithm if \\ $\eta_i <  \dfrac{(1-\delta_{R+K})^2}{(1+\delta_{R+K}+3\zeta+3\xi)}$, where 
		$\eta _i =\dfrac{\left \|\mathbf{X}_\GammaCompColon\right \|_F}{\left \|\mathbf{X}_\THatiCompColon\right \|_F}$, $\zeta = \dfrac{\left \|\mathbf{W}\right \|_F}{\left \|\mathbf{X}_\GammaCompColon\right \|_F}$, and \\ $\xi = \left(3 \sqrt{1+\delta _{R+K}}+1\right) \dfrac{\left\|\mathbf{X}-\mathbf{X}^K\right\|_F}{3 \left\|\mathbf{X}_\GammaCompColon\right\|_F}+ \dfrac{\sqrt{1+\delta _{R+K}}}{\sqrt{R+K}}\dfrac{\left\|\mathbf{X}-\mathbf{X}^K\right\|_{2,1}}{\left\|\mathbf{X}_\GammaCompColon\right\|_F}$. 
		
	\end{enumerate}
\end{thm}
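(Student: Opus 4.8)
The plan is to follow the two operations that {MMV-FACS} actually performs --- the least-squares fit $\mathbf V_{\Gamma,:}=\mathbf A_\Gamma^\dagger\mathbf B$ on the enlarged index set $\Gamma$, and the top-$K$ row selection producing $\hat{\mathcal T}$ followed by the least-squares fit on $\hat{\mathcal T}$ --- and to track the error in the Frobenius norm. Throughout I split $\mathbf X=\mathbf X^K+(\mathbf X-\mathbf X^K)$, treat the row-sparse part $\mathbf X^K$ with ordinary {RIP} estimates, and use Lemma~\ref{lem1_MMVFACS} to bound $\|\mathbf A\mathbf Z\|_F$ whenever $\mathbf Z$ fails to be row-sparse (the tail, or $\mathbf X$ restricted to a large index set). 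Write $\delta:=\delta_{R+K}$ for brevity, and note that $R\le M$, $K<M$ and $\delta<1$ force $\mathbf A_\Gamma$ and $\mathbf A_{\hat{\mathcal T}}$ to have full column rank.

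\emph{Part (i).} Write $\mathbf B=\mathbf A_\Gamma\mathbf X_{\Gamma,:}+\tilde{\mathbf W}$ with $\tilde{\mathbf W}=\mathbf A_{\Gamma^c}\mathbf X_{\Gamma^c,:}+\mathbf W$, so $\mathbf V_{\Gamma,:}=\mathbf X_{\Gamma,:}+\mathbf A_\Gamma^\dagger\tilde{\mathbf W}$ and $\|\mathbf V_{\Gamma,:}-\mathbf X_{\Gamma,:}\|_F\le(1-\delta)^{-1/2}\|\tilde{\mathbf W}\|_F$. I would bound $\|\tilde{\mathbf W}\|_F$ by splitting $\mathbf X_{\Gamma^c,:}$ into its best-$K$ rows (at most $K$ nonzero rows, so {RIP} gives a factor $\sqrt{1+\delta}$ together with $\|\mathbf X^K_{\Gamma^c,:}\|_F\le\|\mathbf X_{\Gamma^c,:}\|_F$) and its tail (handled by Lemma~\ref{lem1_MMVFACS}), obtaining $\|\tilde{\mathbf W}\|_F\le\sqrt{1+\delta}\bigl(\|\mathbf X_{\Gamma^c,:}\|_F+\|\mathbf X-\mathbf X^K\|_F+(R+K)^{-1/2}\|\mathbf X-\mathbf X^K\|_{2,1}\bigr)+\|\mathbf W\|_F$. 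Next I would bound the energy $\|\mathbf X^K_{\mathcal T\setminus\hat{\mathcal T},:}\|_F$ of the true atoms that {MMV-FACS} drops. Since $\hat{\mathcal T}\subseteq\Gamma$, we have $\mathcal T\setminus\hat{\mathcal T}=(\mathcal T\setminus\Gamma)\cup D$ with $D:=(\mathcal T\cap\Gamma)\setminus\hat{\mathcal T}$; the first piece contributes at most $\|\mathbf X_{\Gamma^c,:}\|_F$, and for $D$ I use that $\hat{\mathcal T}$ collects the $K$ rows of $\mathbf V$ of largest $\ell_2$-norm together with $|D|\le|\hat{\mathcal T}\setminus\mathcal T|$ to conclude $\|\mathbf V_{D,:}\|_F\le\|\mathbf V_{\hat{\mathcal T}\setminus\mathcal T,:}\|_F$; substituting $\mathbf V_{\Gamma,:}=\mathbf X_{\Gamma,:}+\mathbf A_\Gamma^\dagger\tilde{\mathbf W}$ on both sides, using $\hat{\mathcal T}\setminus\mathcal T\subseteq\mathcal T^c$ (so $\|\mathbf X_{\hat{\mathcal T}\setminus\mathcal T,:}\|_F\le\|\mathbf X-\mathbf X^K\|_F$), and controlling the leakage term $\mathbf A_\Gamma^\dagger\mathbf A_{\mathcal T\setminus\Gamma}\mathbf X^K_{\mathcal T\setminus\Gamma,:}$ through Lemma~\ref{lem2_MMVFACS} (a projection-residual estimate over the disjoint sets $\mathcal T\setminus\Gamma$ and $\Gamma$) plus the bound on $\|\tilde{\mathbf W}\|_F$ above, gives a bound on $\|\mathbf X^K_{\mathcal T\setminus\hat{\mathcal T},:}\|_F$ in the four quantities $\|\mathbf X_{\Gamma^c,:}\|_F$, $\|\mathbf X-\mathbf X^K\|_F$, $\|\mathbf X-\mathbf X^K\|_{2,1}$, $\|\mathbf W\|_F$. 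Finally I combine $\|\mathbf X-\hat{\mathbf X}\|_F\le\|\mathbf X_{\hat{\mathcal T}^c,:}\|_F+\|\hat{\mathbf X}_{\hat{\mathcal T},:}-\mathbf X_{\hat{\mathcal T},:}\|_F$, bound $\|\mathbf X_{\hat{\mathcal T}^c,:}\|_F\le\|\mathbf X^K_{\mathcal T\setminus\hat{\mathcal T},:}\|_F+\|\mathbf X-\mathbf X^K\|_F$, and for the last term use $\hat{\mathbf X}_{\hat{\mathcal T},:}-\mathbf X_{\hat{\mathcal T},:}=\mathbf A_{\hat{\mathcal T}}^\dagger(\mathbf A_{\hat{\mathcal T}^c}\mathbf X_{\hat{\mathcal T}^c,:}+\mathbf W)$ with $\|\mathbf A_{\hat{\mathcal T}}^\dagger\|\le(1-\delta)^{-1/2}$, Lemma~\ref{lem1_MMVFACS} and the best-$K$ split for $\|\mathbf A_{\hat{\mathcal T}^c}\mathbf X_{\hat{\mathcal T}^c,:}\|_F$; collecting the coefficients of the four quantities and overbounding the resulting rational expressions yields the closed forms $C_1,C_2,C_3,\nu$.

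\emph{Part (ii).} The extra ingredient is that every participating support estimate lies inside $\Gamma$, so $\Gamma^c\subseteq\hat{\mathcal T}_i^c$ and, since $\hat{\mathbf X}_i$ is supported on $\hat{\mathcal T}_i$, $\|\mathbf X_{\Gamma^c,:}\|_F\le\|\mathbf X_{\hat{\mathcal T}_i^c,:}\|_F\le\|\mathbf X-\hat{\mathbf X}_i\|_F$; equivalently $\|\mathbf X_{\Gamma^c,:}\|_F=\eta_i\|\mathbf X_{\hat{\mathcal T}_i^c,:}\|_F\le\eta_i\|\mathbf X-\hat{\mathbf X}_i\|_F$. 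I would then rewrite the bound of part (i) purely in terms of $\|\mathbf X_{\Gamma^c,:}\|_F$, using $\|\mathbf W\|_F=\zeta\|\mathbf X_{\Gamma^c,:}\|_F$ and the definition of $\xi$ (which is precisely the combination of $\|\mathbf X-\mathbf X^K\|_F$ and $\|\mathbf X-\mathbf X^K\|_{2,1}$ measured against $\|\mathbf X_{\Gamma^c,:}\|_F$ for which $C_1\|\mathbf X-\mathbf X^K\|_F+C_2\|\mathbf X-\mathbf X^K\|_{2,1}\le\tfrac{3\xi}{(1-\delta)^2}\|\mathbf X_{\Gamma^c,:}\|_F$), together with the elementary estimates $3-\delta\le3$ and $(1-\delta)^2\le1$, to reach $\|\mathbf X-\hat{\mathbf X}\|_F\le\frac{1+\delta+3\zeta+3\xi}{(1-\delta)^2}\|\mathbf X_{\Gamma^c,:}\|_F\le\frac{(1+\delta+3\zeta+3\xi)\eta_i}{(1-\delta)^2}\|\mathbf X-\hat{\mathbf X}_i\|_F$. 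Forming the ratio $\text{{SRER}}_{\text{FACS}}/\text{{SRER}}_i=\|\mathbf X-\hat{\mathbf X}_i\|_F^2/\|\mathbf X-\hat{\mathbf X}\|_F^2$ then gives the stated gain $\bigl((1-\delta)^2/((1+\delta+3\zeta+3\xi)\eta_i)\bigr)^2$, which exceeds $1$ exactly when $\eta_i<(1-\delta)^2/(1+\delta+3\zeta+3\xi)$, the stated hypothesis; the non-degeneracy conditions $\|\mathbf X_{\hat{\mathcal T}_i^c,:}\|_F\ne0$ and $\|\mathbf X_{\Gamma^c,:}\|_F\ne0$ are exactly what makes $\eta_i,\zeta,\xi$ well defined.

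I expect the main obstacle to be the middle step of part (i): bounding the energy of the dropped true atoms via the top-$K$ selection while keeping the constants tight enough to collapse into $C_1,C_2,C_3,\nu$. The combinatorial row-norm comparison over $D=(\mathcal T\cap\Gamma)\setminus\hat{\mathcal T}$ and $\hat{\mathcal T}\setminus\mathcal T$ is routine, but turning it into a bound on $\|\mathbf X^K_{\mathcal T\setminus\hat{\mathcal T},:}\|_F$ requires controlling how much of $\mathbf A_{\Gamma^c}\mathbf X_{\Gamma^c,:}$ leaks into $\mathbf V_{\Gamma,:}$ through $\mathbf A_\Gamma^\dagger$, which is exactly where Lemmas~\ref{lem1_MMVFACS}--\ref{lem2_MMVFACS} do their work and where the precise shape of the constants is fixed; by contrast, part (ii) is essentially algebra on top of part (i) and the containment $\hat{\mathcal T}_i\subseteq\Gamma$.
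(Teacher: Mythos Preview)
Your plan is essentially the paper's own argument: split $\mathbf X=\mathbf X^K+(\mathbf X-\mathbf X^K)$, control the least-squares error on $\Gamma$, use the top-$K$ selection to bound the energy of the dropped true atoms, repeat the least-squares estimate on $\hat{\mathcal T}$, and finally apply Lemma~\ref{lem1_MMVFACS} to the tail $\mathbf A(\mathbf X-\mathbf X^K)$; part~(ii) is the same algebra in both. Two minor bookkeeping differences are harmless: the paper centres $\mathbf V_{\Gamma,:}$ around $(\mathbf X^K)_{\Gamma,:}$ rather than $\mathbf X_{\Gamma,:}$ (keeping $\mathbf A(\mathbf X-\mathbf X^K)$ intact until the end), and for the selection step it compares $\hat{\mathcal T}_\Delta=\Gamma\setminus\hat{\mathcal T}$ with $\Gamma\setminus\mathcal T$ instead of your $D=(\mathcal T\cap\Gamma)\setminus\hat{\mathcal T}$ versus $\hat{\mathcal T}\setminus\mathcal T$.

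One point does need tightening if you want the stated constants. Bounding $\|\mathbf V_{\Gamma,:}-\mathbf X_{\Gamma,:}\|_F\le(1-\delta)^{-1/2}\|\tilde{\mathbf W}\|_F$ and then $\|\tilde{\mathbf W}\|_F$ via the {RIP} upper bound gives a coefficient $\sqrt{(1+\delta)/(1-\delta)}$ on the sparse leakage $\|(\mathbf X^K)_{\Gamma^c,:}\|_F$; this is too loose to produce $C_3=(1+\delta)/(1-\delta)^2$. The paper instead uses the near-orthogonality estimate
\[
\bigl\|(\mathbf A_\Gamma^H\mathbf A_\Gamma)^{-1}\mathbf A_\Gamma^H\mathbf A_{\Gamma^c}(\mathbf X^K)_{\Gamma^c,:}\bigr\|_F\le\frac{\delta}{1-\delta}\,\|(\mathbf X^K)_{\Gamma^c,:}\|_F
\]
(Corollary~3.3 of Needell--Tropp, exploiting that $(\mathbf X^K)_{\Gamma^c,:}$ is supported on $\mathcal T\setminus\Gamma$, disjoint from $\Gamma$), and applies the analogous bound again at the $\hat{\mathcal T}$ step. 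You invoke this idea for the ``leakage term'', but note that Lemma~\ref{lem2_MMVFACS} is a projection-residual bound, not this cross-correlation bound; you need the latter, applied at both least-squares stages, to land on the exact $C_3$ and $\nu$.
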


\noindent \textit{Proof:} \\
i) We have, 
\begin{align}
\left \|\mathbf{X}-\hat{\mathbf{X}}\right \|_F \leq \left \|\mathbf{X}-\mathbf{X}^{K}\right \|_F  +\left \|\mathbf{X}^{K}-\hat{\mathbf{X}}\right \|_F \label{19}
\end{align}
Consider,
\begin{align}
\left \|\mathbf{X}^K-\hat{\mathbf{X}}\right \|_F & \leq \left \|(\mathbf{X}^K)_\THatColon-\hat{\mathbf{X}}_\THatColon\right \|_F + \left \|(\mathbf{X}^K)_\THatCompColon-\hat{\mathbf{X}}_\THatCompColon\right \|_F \nonumber \\
& \leq \left \|(\mathbf{X}^K)_\THatColon-\hat{\mathbf{X}}_\THatColon\right \|_F + \left \|(\mathbf{X}^K)_\THatCompColon\right \|_F \quad (\because {\hat{\mathbf{X}}}_\THatCompColon =0) \label{five}
\end{align}

\noindent Using the relations $\hat{\mathbf{X}}_\THatColon=\mathbf{A}_{\hat{\mathcal{T}}}^\dagger \mathbf{B}$ (from Algorithm~\ref{Alg:MMV-FACS}) and $\mathbf{A}_{\hat{\mathcal{T}}}^\dagger\mathbf{A}_{\hat{\mathcal{T}}}=\mathbf{I}$, we get
\begin{align}
&\left \|(\mathbf{X}^K)_\THatColon-\hat{\mathbf{X}}_\THatColon\right \|_F \nonumber \\
\hspace{1cm}& = \left \|(\mathbf{X}^K)_\THatColon-\mathbf{A}_{\hat{\mathcal{T}}}^\dagger \mathbf{B}\right \|_F \nonumber \\
& = \left \|(\mathbf{X}^K)_\THatColon-\mathbf{A}_{\hat{\mathcal{T}}}^\dagger \left( \mathbf{AX}+\mathbf{W}\right)\right  \|_F \qquad \qquad \left(\because \mathbf{B}=\mathbf{AX}+\mathbf{W}\right)\nonumber \\ 
& = \left \|(\mathbf{X}^K)_\THatColon-\mathbf{A}_{\hat{\mathcal{T}}}^\dagger \left(\mathbf{AX}^K+\mathbf{A}(\mathbf{X}-\mathbf{X}^K)+\mathbf{W}\right)\right \|_F \nonumber \\ 
& =  \left \|(\mathbf{X}^K)_\THatColon-\mathbf{A}_{\hat{\mathcal{T}}}^\dagger \left[ \mathbf{A}_{\hat{\mathcal{T}}}(\mathbf{X}^K)_\THatColon+\mathbf{A}_{\hat{\mathcal{T}}^c}(\mathbf{X}^K)_\THatCompColon +\mathbf{A}(\mathbf{X}-\mathbf{X}^K)+\mathbf{W}\right] \right \|_F \nonumber \\
& = \left \|\mathbf{A}_{\hat{\mathcal{T}}}^\dagger \mathbf{A}_{\hat{\mathcal{T}}^c}(\mathbf{X}^K)_\THatCompColon + \mathbf{A}_{\hat{\mathcal{T}}}^\dagger \mathbf{A}(\mathbf{X} - \mathbf{X}^K) + \mathbf{A}_{\hat{\mathcal{T}}}^\dagger \mathbf{W}\right \|_F \nonumber \\
& \leq \left \|\left(\mathbf{A}_{\hat{\mathcal{T}}}^H \mathbf{A}_{\hat{\mathcal{T}}} \right)^{-1} \mathbf{A}_{\hat{\mathcal{T}}}^H\mathbf{A}_{\hat{\mathcal{T}}^c}(\mathbf{X}^K)_\THatCompColon\right \|_F +\left \|\mathbf{A}_{\hat{\mathcal{T}}}^\dagger \mathbf{A}(\mathbf{X}-\mathbf{X}^K)\right\|_F +\left\|\mathbf{A}_{\hat{\mathcal{T}}}^\dagger \mathbf{W}\right \|_F \label{fivea}
\end{align}

\noindent Let $\mathbf{x}^{(i)}$ denote the $i^\text{th}$ column of matrix $\mathbf{X}$ and $\mathbf{w}^{(i)}$ denote the $i^\text{th}$ column of matrix $\mathbf{W}$, $i=1,2,\ldots L$. Now from Proposition~3.1 and  Corollary~3.3 of \cite{Needell2009CoSaMP} we obtain the following relations.
\begin{align}
\left \|\mathbf{A}_{\hat{\mathcal{T}}}^\dagger \mathbf{w}^{(i)}\right \|_2 & \leq  \frac{ \left \|\mathbf{w}^{(i)}\right \|_2}{\sqrt{1-\delta _{R+K}}} \label{p3.1w}
\end{align}
\begin{align}
\left \|\mathbf{A}_{\hat{\mathcal{T}}}^\dagger \mathbf{A}\left(\mathbf{x}^{(i)}-(\mathbf{x}^{(i)})^{K}\right)\right \|_2 & \leq \frac{ \left \| \mathbf{A}\left(\mathbf{x}^{(i)}-(\mathbf{x}^{(i)})^{K}\right)\right \|_2}{\sqrt{1-\delta _{R+K}}} \label{p3.1a}
\end{align}
\begin{align}
\left \|\left(\mathbf{A}_{\hat{\mathcal{T}}}^H \mathbf{A}_{\hat{\mathcal{T}}} \right)^{-1} \mathbf{A}_{\hat{\mathcal{T}}}^H\mathbf{A}_{\hat{\mathcal{T}}^c}\left((\mathbf{x}^{(i)})^K\right)_{\hat{\mathcal{T}}^c}\right \|_2 \leq \frac{\delta _{R+K}}{1-\delta _{R+K}}\left \|\left((\mathbf{x}^{(i)})^{K}\right)_{\hat{\mathcal{T}}^c}\right \|_2 \label{p3.1b}
\end{align}

Consider \eqref{p3.1w}, we get
\begin{align}
\left \|\mathbf{A}_{\hat{\mathcal{T}}}^\dagger \mathbf{w}^{(i)}\right \|_2^2 \leq  \frac{ \left \|\mathbf{w}^{(i)}\right \|_2^2}{1-\delta _{R+K}} \quad \forall \; i=1,2,\ldots L \nonumber 
\end{align}	
Summing the above equation over $i=1,2,\ldots L$, we obtain
\begin{align}	
\sum _{i=1}^L\left \|\mathbf{A}_{\hat{\mathcal{T}}}^\dagger \mathbf{w}^{(i)}\right \|_2^2 & \leq \frac{1}{1-\delta _{R+K}}\sum _{i=1}^L \left \|\mathbf{w}^{(i)}\right \|_2^2 \nonumber \\
\left \|\mathbf{A}_{\hat{\mathcal{T}}}^\dagger \mathbf{W}\right \|_F^2 & \leq \frac{1}{1-\delta _{R+K}}\left \|\mathbf{W}\right \|_F^2 \nonumber \\
\left \|\mathbf{A}_{\hat{\mathcal{T}}}^\dagger \mathbf{W}\right \|_F & \leq \frac{1}{\sqrt{1-\delta _{R+K}}}\left \|\mathbf{W}\right \|_F. \label{p3.1wa}
\end{align}	
Similarly, summing the relations in \eqref{p3.1a} and \eqref{p3.1b}, we obtain 
\begin{align}
\left \|\mathbf{A}_{\hat{\mathcal{T}}}^\dagger \mathbf{A}\left(\mathbf{X}-\mathbf{X}^K\right)\right \|_F & \leq  \frac{ \left \| \mathbf{A}\left(\mathbf{X}-\mathbf{X}^K\right)\right \|_F}{\sqrt{1-\delta _{R+K}}} \label{p3.1aa}
\end{align}
\begin{align}
\left \|\left(\mathbf{A}_{\hat{\mathcal{T}}}^H \mathbf{A}_{\hat{\mathcal{T}}} \right)^{-1} \mathbf{A}_{\hat{\mathcal{T}}}^H\mathbf{A}_{\hat{\mathcal{T}}^c}(\mathbf{X}^K)_\THatCompColon\right \|_F \leq \frac{\delta _{R+K}}{1-\delta _{R+K}}\left \|(\mathbf{X}^K)_\THatCompColon\right \|_F \label{p3.1bb}
\end{align}
Substituting \eqref{p3.1wa},\eqref{p3.1aa} and \eqref{p3.1bb} in \eqref{fivea}, we get
\begin{align}
\left \|(\mathbf{X}^K)_\THatColon-\hat{\mathbf{X}}_\THatColon\right \|_F & \leq \frac{\delta _{R+K}}{1-\delta _{R+K}}\left \|(\mathbf{X}^K)_\THatCompColon\right \|_F +\frac{ \left \| \mathbf{A}\left(\mathbf{X}-\mathbf{X}^K\right)\right \|_F}{\sqrt{1-\delta _{R+K}}} + \frac{ \left \|\mathbf{W}\right \|_F}{\sqrt{1-\delta _{R+K}}} \nonumber \\
& \leq \frac{\delta _{R+K}}{1-\delta _{R+K}}\left \|(\mathbf{X}^K)_\THatCompColon\right \|_F +\frac{1}{1-\delta _{R+K}} \left( \left \| \mathbf{A}\left(\mathbf{X}-\mathbf{X}^K\right)\right \|_F+\left \|\mathbf{W}\right \|_F \right) \label{twenty}
\end{align}	
Substituting \eqref{twenty} in \eqref{five}, we get 
\begin{align}
\left \|\mathbf{X}^{K} -\hat{\mathbf{X}}\right \|_F & \leq \frac{1}{1-\delta _{R+K}} \left \|(\mathbf{X}^K)_\THatCompColon\right \|_F + \frac{1}{1-\delta _{R+K}} \left( \left \|\mathbf{A}(\mathbf{X}-\mathbf{X}^{K}) \right \|_F + \left \|\mathbf{W}\right \|_F \right) \label{20}
\end{align}	
Next, we will find an upper bound for $\left \|(\mathbf{X}^K)_\THatCompColon\right \|_F$. \\
Define $\hat{\mathcal{T}}_\Delta \triangleq \Gamma \setminus \hat{\mathcal{T}}$.  That is, $\hat{\mathcal{T}}_\Delta$ is the set formed by the atoms in $\Gamma$ which are discarded by Algorithm~\ref{Alg:MMV-FACS}. Since $\hat{\mathcal{T}} \subset \Gamma$, we have $\hat{\mathcal{T}}^c = \Gamma ^c \cup \hat{\mathcal{T}}_\Delta$ and hence we obtain
\begin{align}
\left \|(\mathbf{X}^K)_\THatCompColon\right \|_F \leq \left \|(\mathbf{X}^K)_\GammaCompColon\right \|_F+\left \|(\mathbf{X}^K)_\THatDeltaColon\right \|_F \label{eight}
\end{align}	
We also have, 
\begin{align}
\left \|(\mathbf{X}^K)_\THatDeltaColon\right \|_F & \leq \left \|\left(\mathbf{V}_\GammaColon \right)_\THatDeltaColon\right \|_F +\left \|\left(\mathbf{V}_\GammaColon -(\mathbf{X}^K)_\GammaColon \right)_\THatDeltaColon\right \|_F \nonumber \\
& \leq \left \|\left(\mathbf{V}_\GammaColon \right)_\THatDeltaColon\right \|_F +\left \|\mathbf{V}_\GammaColon -(\mathbf{X}^K)_\GammaColon\right \|_F \label{nine}
\end{align}	
Note that $\left(\mathbf{V}_\GammaColon \right)_\THatColon$ contains the $K$-rows of $\mathbf{V}_\GammaColon$ with highest row $\ell_2$-norm. Therefore, using $|\hat{\mathcal{T}}|=|\mathcal{T}|=K$, we get
\begin{align}
\left \|\left(\mathbf{V}_\GammaColon \right)_\THatDeltaColon\right \|_F & \leq \left \|\left(\mathbf{V}_\GammaColon \right)_{\Gamma \setminus \mathcal{T}, \, :}\right \|_F \nonumber \\
& = \left \|\mathbf{V}_{\Gamma \setminus \mathcal{T}, \, :} -(\mathbf{X}^K)_{\Gamma \setminus \mathcal{T}, \, :}\right \|_F \quad \left(\because (\mathbf{X}^K)_{\Gamma \setminus \mathcal{T}, \, :} =0 \right)\nonumber \\
& \leq \left \|\left(\mathbf{V}_\GammaColon -(\mathbf{X}^K)_\GammaColon\right)\right \|_F. \label{ten}
\end{align}	
Substituting \eqref{ten} in \eqref{nine}, we get 
\begin{align}
\left \|(\mathbf{X}^K)_\THatDeltaColon\right \|_F \leq 2\left \|\left(\mathbf{V}_\GammaColon - (\mathbf{X}^K)_\GammaColon\right)\right \|_F. \label{eleven}
\end{align}
Now, consider
\begin{flalign}
&\left \|\mathbf{V}_\GammaColon -(\mathbf{X}^K)_\GammaColon\right \|_F \hspace{4cm} \nonumber \\
&\hspace{1cm} = \left \|\mathbf{A}_\Gamma^{\dagger} \mathbf{B} -(\mathbf{X}^K)_\GammaColon\right \|_F \nonumber \\
&\hspace{1cm} = \left \|\mathbf{A}_\Gamma^{\dagger} (\mathbf{AX}+\mathbf{W}) -(\mathbf{X}^K)_\GammaColon\right \|_F \nonumber \\
&\hspace{1cm} = \left \|\mathbf{A}_\Gamma^{\dagger} \left( \mathbf{AX}^K+\mathbf{A}(\mathbf{X}-\mathbf{X}^K)+\mathbf{W} \right) - (\mathbf{X}^K)_\GammaColon\right \|_F \nonumber \\
&\hspace{1cm} =  \left \|\mathbf{A}_\Gamma^{\dagger} \Bigl(\mathbf{A}_\Gamma (\mathbf{X}^K)_\GammaColon +\mathbf{A}_{\Gamma ^c} (\mathbf{X}^K)_\GammaCompColon + \mathbf{A}(\mathbf{X}-\mathbf{X}^K)+\mathbf{W}\Bigr) -(\mathbf{X}^K)_\GammaColon \right\|_F \nonumber \\
&\hspace{1cm} = \left \|\mathbf{A}_\Gamma^{\dagger} \mathbf{A}_{\Gamma ^c} (\mathbf{X}^K)_\GammaCompColon + \mathbf{A}_\Gamma^{\dagger} \mathbf{A}(\mathbf{X}-\mathbf{X}^K)+\mathbf{A}_\Gamma^{\dagger} \mathbf{W} \right \|_F \qquad (\because \A^\dagger_\Gamma\A_\Gamma = \I)\nonumber  \\
&\hspace{1cm} \leq \left \|\mathbf{A}_\Gamma^{\dagger} \mathbf{A}_{\Gamma ^c} (\mathbf{X}^K)_\GammaCompColon \right \|_F + \left \|\mathbf{A}_\Gamma^{\dagger} \mathbf{A}(\mathbf{X}-\mathbf{X}^K)\right \|_F +\ \left \| \mathbf{A}_\Gamma^{\dagger} \mathbf{W} \right \|_F. \label{eqn:MMV_Soo1}
\end{flalign}	

Using \eqref{p3.1wa}, \eqref{p3.1aa} and \eqref{p3.1bb} in \eqref{eqn:MMV_Soo1}, we get
\begin{align}
\left \|\mathbf{V}_\GammaColon -(\mathbf{X}^K)_\GammaColon\right \|_F &\leq  \frac{\delta _{R+K}}{1-\delta _{R+K}}\left \|(\mathbf{X}^K)_\GammaCompColon\right \|_F +\frac{\left \|\mathbf{A}(\mathbf{X}-\mathbf{X}^K)\right \|_F}{\sqrt{1-\delta _{R+K}}}  +\frac{ \left \|\mathbf{W}\right \|_F}{{\sqrt{1-\delta _{R+K}}}} \nonumber \\	
& \leq  \frac{\delta _{R+K}}{1-\delta _{R+K}}\left \|(\mathbf{X}^K)_\GammaCompColon\right \|_F +\frac{\left \|\mathbf{A}(\mathbf{X}-\mathbf{X}^K)\right \|_F}{{1-\delta _{R+K}}}  +\frac{ \left \|\mathbf{W}\right \|_F}{{{1-\delta _{R+K}}}}.
\label{twelve}\\
& \qquad (\because 0< 1-\deltaRK < 1) \nonumber 
\end{align}	

Using \eqref{eleven} and \eqref{twelve} in \eqref{eight}, we get 
\begin{align}
\left \|(\mathbf{X}^K)_\THatCompColon\right \|_F \leq & \frac{1+\delta _{R+K}}{1-\delta _{R+K}}\left \|(\mathbf{X}^K)_\GammaCompColon\right \|_F +\frac{2}{{1-\delta _{R+K}}}\left( \left \|\mathbf{A}(\mathbf{X}-\mathbf{X}^{K}) \right \|_F + \left \|\mathbf{W}\right \|_F \right). \label{13} 
\end{align}	
Let $\mathbf{x}_1^{(i)}$ denote the $i^\text{th}$ column of matrix $\mathbf{X}^{K}$. The, we have,
\begin{align}
\left \|\mathbf{A}\mathbf{x}_1^{(i)}\right \|_2^2 & \leq (1+\delta _{R+K})\left \|\mathbf{x}_1^{(i)}\right \|_2^2 &&(\because \A \text{ satisfies {RIP}} )\nonumber \\
\sum _{i=1}^L \left \|\mathbf{A}\mathbf{x}_1^{(i)}\right \|_2^2 & \leq \sum _{i=1}^L(1+\delta _{R+K})\left \|\mathbf{x}_1^{(i)}\right \|_2^2 \nonumber \\
\left \|\mathbf{A}\mathbf{X}^{K}\right \|_F^2 & \leq (1+\delta _{R+K})\left \|\mathbf{X}^{K}\right \|_F^2 \label{21a}
\end{align}	
\noindent Using Lemma~\ref{lem1_MMVFACS} and \eqref{21a}, we get
\begin{align}
\left \|\mathbf{A}(\mathbf{X}-\mathbf{X}^{K})\right \|_F \leq \sqrt{1+\delta _{R+K}} \Biggl[ \left \|\mathbf{X}-\mathbf{X}^{K}\right \|_F + \frac{1}{\sqrt{R+K}} \left \|\left(\mathbf{X}-\mathbf{X}^{K}\right)\right \|_{2,1}\Biggr] \label{22}
\end{align}

Substituting \eqref{13} in \eqref{20}, we get 
\begin{align}	
\vecNorm{\X^K - \XHat}_F	 &\le \frac{1+\deltaRK}{(1-\deltaRK)^2} \vecNorm{(\X^K)_\GammaCompColon}_F + \frac{3-\deltaRK}{(1-\deltaRK)^2} \left[\vecNorm{\A(\X-\X^K)}_F + \vecNorm{\W}_F\right] \nonumber \\
&\le  \nu \sqrt{1+\delta _{R+K}} \left \|\mathbf{X}-\mathbf{X}^{K}\right \|_F + \frac{\nu \sqrt{1+\delta _{R+K}}\left \|\mathbf{X}-\mathbf{X}^{K}\right \|_{2,1}}{\sqrt{R+K}} \nonumber \\ 
& \quad + \frac{1+\delta _{R+K}}{(1-\delta _{R+K})^2}\left \|\mathbf{X}_\GammaCompColon\right \|_F + \nu \left \|\mathbf{W}\right \|_F, \qquad \qquad \text{(using }\eqref{22}\text{)}\label{23}
\end{align}	
where $\displaystyle \nu=\frac{3-\deltaRK}{(1-\delta_{R+K})^2}$.\\
Substituting \eqref{23} in \eqref{19} and using the definitions of $C_1, C_2$, and $C_3$, we get 
\begin{align}
\left \|\mathbf{X}-\hat{\mathbf{X}}\right \|_F \leq \; & C_1 \left \|\mathbf{X}-\mathbf{X}^{K}\right \|_F + C_2 \left \|\mathbf{X}-\mathbf{X}^{K}\right \|_{2,1} + C_3 \left \|\mathbf{X}_\GammaCompColon\right \|_F + \nu \left \|\mathbf{W}\right \|_F. \label{result}
\end{align}
\\	
ii) Using \eqref{result} and the definitions of $\xi$ and $\eta _i$, we get 
\begin{align}
\left \| \mathbf{X}- \hat{\mathbf{X}}\right \|_F & \leq \frac{1+\delta _{R+K}+3\zeta+3\xi}{(1-\delta _{R+K})^2}\left \|\mathbf{X}_\GammaCompColon\right \|_F \nonumber \\
& = \frac{1+\delta _{R+K}+3\zeta+3\xi}{(1-\delta _{R+K})^2} \eta _i \left \|(\mathbf{X}-{\hat{\mathbf{X}}}_i)_{\hat{\mathcal{T}}_i^c, \, :}\right \|_F \quad(\because ({\hat{\mathbf{X}}}_i)_{\hat{\mathcal{T}}_i^c, \, :} = \mathbf{0}) \nonumber \\
& \leq \frac{1+\delta _{R+K}+3\zeta+3\xi}{(1-\delta _{R+K})^2} \eta _i \left \|(\mathbf{X}-{\hat{\mathbf{X}}}_i)\right \|_F. \nonumber 
\end{align}	
Hence, we obtain the relation for {SRER} for {MMV-FACS}, in case of arbitrary signals, as
\begin{align*}
\text{{SRER}}|_{\text{{MMV-FACS}}} & = \frac{\left \|\mathbf{X}\right \|_F^2}{\left \|\mathbf{X}-\hat{\mathbf{X}}\right \|_F^2} \\
& \geq \frac{\left \|\mathbf{X}\right \|^2_F}{\left \|\mathbf{X}-{\hat{\mathbf{X}}}_i\right \|^2_F} \times \left( \frac{(1-\delta_{R+K})^2}{(1+\delta_{R+K}+3\zeta+3 \xi) \eta_i} \right)^2
\end{align*}	
Hence {MMV-FACS} provides at least {SRER} gain of $\left( \dfrac{(1-\delta_{R+K})^2}{(1+\delta_{R+K}+3\zeta+3\xi) \eta_i} \right)^2$ over $i^\text{th}$ algorithm if $\eta_i <  \dfrac{(1-\delta_{R+K})^2}{(1+\delta_{R+K}+3\zeta+3\xi)}$. 	\\
Note that $\dfrac{(1-\delta_{R+K})^2}{(1+\delta_{R+K}+3\zeta+3\xi)} < 1$.
\qed

\subsection{Exactly $K$-sparse Matrix}
Theorem~\ref{thm:MainMMVFACS} considered the case when $\mathbf{X}$ is an arbitrary matrix. If $\mathbf{X}$ is a $K$-sparse matrix then we have $\mathbf{X}=\mathbf{X}^K$ and $\xi=0$. Thus, it follows from Theorem~\ref{thm:MainMMVFACS} that, {MMV-FACS} provides at least {SRER} gain of $\left( \dfrac{(1-\delta_{R+K})^2}{(1+\delta_{R+K}+3\zeta) \eta_i} \right)^2$ over $i^\text{th}$ participating algorithm if $\eta_i <  \dfrac{(1-\delta_{R+K})^2}{(1+\delta_{R+K}+3\zeta)} $. Thus, the improvement in the {SRER} gain provided by {MMV-FACS} over the $i^\text{th}$ Algorithm for a $K$-sparse matrix is greater than that of an arbitrary matrix by a factor of $\left(1+\dfrac{3 \xi}{(1+\delta_{R+K}+3\zeta)}\right)^2$.\\
\par
The second part of Theorem~\ref{thm:MainMMVFACS} considers the case when $\left \|\mathbf{X}_\THatiCompColon \right \|_F \neq 0 $ and $\left \|\mathbf{X}_\GammaCompColon\right \|_F \neq 0$. If $\left \|\mathbf{X}_\THatiCompColon \right \|_F \neq 0 $, then $\hat{\mathcal{T}}_i \nsubseteq \mathcal{T}$. Also, $\left \|\mathbf{X}_\GammaCompColon\right \|_F = 0$ implies $ \mathcal{T} \subseteq \Gamma$. Suppose $\left \|\mathbf{X}_\THatiCompColon \right \|_F = 0 $, then the support-set is correctly estimated by $i^\text{th}$ algorithm and further performance improvement is not possible by {MMV-FACS}. 
Hence we consider the case where $\left \|\mathbf{X}_\GammaCompColon\right \|_F = 0$, and derive the condition for exact reconstruction by {MMV-FACS} in the following proposition.
\begin{prop}
	Assume that $\left \|\mathbf{X}_\GammaCompColon\right \|_F = 0$ and all other conditions in Theorem~\ref{thm:MainMMVFACS} hold good. Then, in clean measurement case ($\mathbf{W}=\mathbf{0}$), {MMV-FACS} estimates the support-set correctly and provides exact reconstruction.
\end{prop}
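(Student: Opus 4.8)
The plan is to read off the exact cancellations in the proof of Theorem~\ref{thm:MainMMVFACS} once the extra hypotheses are imposed, and then to pin down the estimated support-set separately. We are in the exactly $K$-sparse regime of this subsection, so $\X=\X^K$; the hypothesis $\vecNorm{\X_\GammaCompColon}_F=0$ gives $\X_\GammaCompColon=\0$, hence also $(\X^K)_\GammaCompColon=\0$, and $\W=\0$. First I would substitute these into the error bound \eqref{result}: every term on its right-hand side then vanishes ($\vecNorm{\X-\X^K}_F=0$, $\vecNorm{\X-\X^K}_{2,1}=0$, $\vecNorm{\X_\GammaCompColon}_F=0$, $\vecNorm{\W}_F=0$), so $\vecNorm{\X-\XHat}_F\le 0$, i.e.\ $\XHat=\X$. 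This alone settles the exact-reconstruction part; what remains is to show that the support-set $\THat$ produced in Step~3 of Algorithm~\ref{Alg:MMV-FACS} equals $\T=\texttt{supp}(\X^K)$.

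For that I would argue directly at the level of $\V$. Since $\X_\GammaCompColon=\0$ we have $\texttt{supp}(\X)\subseteq\Gamma$, so with $\W=\0$ the measurement reads $\B=\A\X=\A_\Gamma\X_\GammaColon$. Because $\A$ satisfies RIP with $\deltaRK\in(0,1)$ and $R=|\Gamma|\le R+K$ (and $R\le M$), the columns of $\A_\Gamma$ are linearly independent, so $\A_\Gamma^\dagger\A_\Gamma=\I$ --- the very identity already invoked in the proof of Theorem~\ref{thm:MainMMVFACS}. Hence Step~2 of Algorithm~\ref{Alg:MMV-FACS} returns $\V_\GammaColon=\A_\Gamma^\dagger\A_\Gamma\X_\GammaColon=\X_\GammaColon$ and $\V_\GammaCompColon=\0=\X_\GammaCompColon$, i.e.\ $\V=\X$; equivalently, the chain \eqref{eqn:MMV_Soo1}--\eqref{twelve} that bounds $\vecNorm{\V_\GammaColon-(\X^K)_\GammaColon}_F$ collapses to zero under these hypotheses. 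Then $\V^K=\X^K$ with the same lexicographic tie-breaking, so $\THat=\texttt{supp}(\V^K)=\texttt{supp}(\X^K)=\T$, which is the support claim.

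As a consistency check I would close by re-deriving $\XHat=\X$ from the output rule of Algorithm~\ref{Alg:MMV-FACS}: with $\THat=\T$ and $\texttt{supp}(\X)\subseteq\T$ one has $\A\X=\A_\T\X_{\T,:}$, hence $\XHat_\THatColon=\A_\THat^\dagger\B=\A_\T^\dagger\A_\T\X_{\T,:}=\X_{\T,:}$ and $\XHat_\THatCompColon=\0=\X_{\T^c,:}$. I do not anticipate a real obstacle --- the proposition is a degenerate specialization of Theorem~\ref{thm:MainMMVFACS} --- and the only steps that need a line of care are the full-column-rank of $\A_\Gamma$ (which is where $R\le M$ together with the RIP enters, so that $\A_\Gamma^\dagger\A_\Gamma=\I$) and the innocuous ambiguity of $\texttt{supp}(\X^K)$ when $\X$ has fewer than $K$ nonzero rows, handled by keeping the convention $\T=\texttt{supp}(\X^K)$ fixed in the statement of Theorem~\ref{thm:MainMMVFACS}.
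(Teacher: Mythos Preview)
Your proposal is correct, and the core argument --- computing $\V_\GammaColon=\A_\Gamma^\dagger\B=\A_\Gamma^\dagger\A_\Gamma\X_\GammaColon=\X_\GammaColon$ and hence $\V=\X$ under the hypotheses $\T\subset\Gamma$ and $\W=\0$ --- is exactly what the paper does. The paper's proof is shorter only because it skips your initial detour through the error bound \eqref{result}: that step is valid but redundant, since once you have $\V=\X$ the support identification and the exact reconstruction both follow immediately from the output rule of Algorithm~\ref{Alg:MMV-FACS}, and the paper simply stops there.
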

\noindent \textit{Proof}: 
We have 
\begin{align}
\mathbf{X}_\GammaCompColon & = \mathbf{0} \Rightarrow \mathcal{T} \subset \Gamma \label{16}\\
\mathbf{B} & =\mathbf{A}_{\mathcal{T}}\mathbf{X}_\THatColon+\mathbf{W} \label{17}
\end{align}
From Algorithm~\ref{Alg:MMV-FACS}, we have $\mathbf{V} \in \mathbb{R}^{N\times L}$ where $\mathbf{V}_{\Gamma^c,\, :}=\mathbf{0}$, and 
\begin{align*}
\mathbf{V}_\GammaColon & = \mathbf{A}_{\Gamma}^\dagger \mathbf{B} \\
& = \mathbf{A}_{\Gamma}^\dagger (\mathbf{A}_{\mathcal{T}}\mathbf{X}_\THatColon + \mathbf{W}) &&(\text{using} \;\eqref{17})\\
& = \mathbf{A}_{\Gamma}^\dagger (\mathbf{A}_{\Gamma}\mathbf{X}_\GammaColon + \mathbf{W}) &&(\text{using} \;\eqref{16})\\
& = \mathbf{X}_\GammaColon + \mathbf{A}_{\Gamma}^\dagger \mathbf{W}.
\end{align*}
If $\mathbf{W}=\mathbf{0}$, then $\mathbf{V}_\GammaColon = \mathbf{X}_\GammaColon$ and $\mathbf{V}=\mathbf{X}$ ($\because \mathcal{T} \subset \Gamma $). Thus {MMV-FACS} estimates the support-set correctly from $\mathbf{V}$.
\qed
\par
In practice, the original signal is not known and hence it is not possible to evaluate the performance w.r.t. the true signal. Hence in applications, the decrease in energy of the residual is often treated as a measure of performance improvement. Proposition \ref{ptwo} gives a sufficient condition for decrease in the energy of the residual matrix obtained by {MMV-FACS} over the $i^\text{th}$ participating algorithm.

\begin{prop}\label{ptwo}
	For a $K$-sparse matrix $\mathbf{X}$, let $\mathbf{R}=\mathbf{B}-\mathbf{A}_{\hat{\mathcal{T}}}\mathbf{A}_{\hat{\mathcal{T}}}^\dagger \mathbf{B}$ and $\mathbf{R}_i=\mathbf{B}-\mathbf{A}_{\hat{\mathcal{T}}_i}\mathbf{A}_{\hat{\mathcal{T}}_i}^\dagger \mathbf{B}$ represent the residue matrix of {MMV-FACS} and $i^\text{th}$ Algorithm respectively. Assume that $\dfrac{\sqrt{1+\delta _{R+K}}}{1-\delta _{R+K}} ( 1 + \delta _{R+K} + 3 \zeta) \leq \left(\dfrac{1-2 \delta _{R+K}}{\eta _i \sqrt{1-\delta _{R+K}}} -\zeta \right) $ is satisfied then we have, $\left \|\mathbf{R}\right \|_F \leq \left \|\mathbf{R}_i\right \|_F$.
\end{prop}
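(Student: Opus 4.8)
The plan is to sandwich both residual energies between constant multiples of $\|\mathbf{X}_{\Gamma^c,:}\|_F$ and then invoke the hypothesis. For the upper bound on $\|\mathbf{R}\|_F$: since $\mathbf{X}$ is $K$-sparse, $\mathbf{B}=\mathbf{A}_{\hat{\mathcal{T}}}\mathbf{X}_{\hat{\mathcal{T}},:}+\mathbf{A}_{\hat{\mathcal{T}}^c}\mathbf{X}_{\hat{\mathcal{T}}^c,:}+\mathbf{W}$, and the orthogonal projector $\mathbf{I}-\mathbf{A}_{\hat{\mathcal{T}}}\mathbf{A}_{\hat{\mathcal{T}}}^\dagger$ annihilates the first term, so $\|\mathbf{R}\|_F\le\|\mathbf{A}_{\hat{\mathcal{T}}^c}\mathbf{X}_{\hat{\mathcal{T}}^c,:}\|_F+\|\mathbf{W}\|_F$. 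The rows of $\mathbf{X}_{\hat{\mathcal{T}}^c,:}$ sit on $\mathcal{T}\setminus\hat{\mathcal{T}}$ ($\le K\le R+K$ of them), so matrix RIP (derived columnwise as in \eqref{21a}) gives $\|\mathbf{A}_{\hat{\mathcal{T}}^c}\mathbf{X}_{\hat{\mathcal{T}}^c,:}\|_F\le\sqrt{1+\delta_{R+K}}\,\|\mathbf{X}_{\hat{\mathcal{T}}^c,:}\|_F$. I would then feed in \eqref{13} specialised to $\mathbf{X}=\mathbf{X}^K$ (so $\|\mathbf{A}(\mathbf{X}-\mathbf{X}^K)\|_F=0$), which bounds $\|\mathbf{X}_{\hat{\mathcal{T}}^c,:}\|_F$ by $\tfrac{1+\delta_{R+K}}{1-\delta_{R+K}}\|\mathbf{X}_{\Gamma^c,:}\|_F+\tfrac{2}{1-\delta_{R+K}}\|\mathbf{W}\|_F$; absorbing the stray $\|\mathbf{W}\|_F$ via $\tfrac{\sqrt{1+\delta_{R+K}}}{1-\delta_{R+K}}\ge 1$ and substituting $\|\mathbf{W}\|_F=\zeta\|\mathbf{X}_{\Gamma^c,:}\|_F$ yields
\begin{align*}
\left\|\mathbf{R}\right\|_F\le\frac{\sqrt{1+\delta_{R+K}}}{1-\delta_{R+K}}\bigl(1+\delta_{R+K}+3\zeta\bigr)\left\|\mathbf{X}_{\Gamma^c,:}\right\|_F,
\end{align*}
which is $\|\mathbf{X}_{\Gamma^c,:}\|_F$ times the left-hand side of the assumed inequality.

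For the lower bound on $\|\mathbf{R}_i\|_F$: split $\mathbf{A}\mathbf{X}=\mathbf{A}_{\mathcal{T}\cap\hat{\mathcal{T}}_i}\mathbf{X}_{\mathcal{T}\cap\hat{\mathcal{T}}_i,:}+\mathbf{A}_{\mathcal{T}\setminus\hat{\mathcal{T}}_i}\mathbf{X}_{\mathcal{T}\setminus\hat{\mathcal{T}}_i,:}$; the projector $\mathbf{I}-\mathbf{A}_{\hat{\mathcal{T}}_i}\mathbf{A}_{\hat{\mathcal{T}}_i}^\dagger$ kills the first part, so $\mathbf{R}_i=(\mathbf{I}-\mathbf{A}_{\hat{\mathcal{T}}_i}\mathbf{A}_{\hat{\mathcal{T}}_i}^\dagger)(\mathbf{Y}+\mathbf{W})$ with $\mathbf{Y}=\mathbf{A}_{\mathcal{T}\setminus\hat{\mathcal{T}}_i}\mathbf{X}_{\mathcal{T}\setminus\hat{\mathcal{T}}_i,:}$. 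Applying Lemma~\ref{lem2_MMVFACS} column by column with $\mathcal{T}_1=\mathcal{T}\setminus\hat{\mathcal{T}}_i$ and $\mathcal{T}_2=\hat{\mathcal{T}}_i$ — disjoint, $|\mathcal{T}_1|+|\mathcal{T}_2|\le R+K$, and every column of $\mathbf{Y}$ lies in the range of $\mathbf{A}_{\mathcal{T}_1}$ — then squaring and summing over the $L$ columns gives $\|(\mathbf{I}-\mathbf{A}_{\hat{\mathcal{T}}_i}\mathbf{A}_{\hat{\mathcal{T}}_i}^\dagger)\mathbf{Y}\|_F\ge\tfrac{1-2\delta_{R+K}}{1-\delta_{R+K}}\|\mathbf{Y}\|_F$. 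The lower RIP bound together with $\|\mathbf{X}_{\mathcal{T}\setminus\hat{\mathcal{T}}_i,:}\|_F=\|\mathbf{X}_{\hat{\mathcal{T}}_i^c,:}\|_F$ gives $\|\mathbf{Y}\|_F\ge\sqrt{1-\delta_{R+K}}\,\|\mathbf{X}_{\hat{\mathcal{T}}_i^c,:}\|_F$; combining with $\|(\mathbf{I}-\mathbf{A}_{\hat{\mathcal{T}}_i}\mathbf{A}_{\hat{\mathcal{T}}_i}^\dagger)\mathbf{W}\|_F\le\|\mathbf{W}\|_F$, the reverse triangle inequality, and the definitions of $\eta_i$ and $\zeta$ yields
\begin{align*}
\left\|\mathbf{R}_i\right\|_F\ge\frac{1-2\delta_{R+K}}{\sqrt{1-\delta_{R+K}}}\left\|\mathbf{X}_{\hat{\mathcal{T}}_i^c,:}\right\|_F-\left\|\mathbf{W}\right\|_F=\left(\frac{1-2\delta_{R+K}}{\eta_i\sqrt{1-\delta_{R+K}}}-\zeta\right)\left\|\mathbf{X}_{\Gamma^c,:}\right\|_F.
\end{align*}
The hypothesis asserts the coefficient in the first display is at most that in the second, so chaining the two inequalities gives $\|\mathbf{R}\|_F\le\|\mathbf{R}_i\|_F$.

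The main obstacle is the lower bound on $\|\mathbf{R}_i\|_F$. One has to spot the decomposition of $\mathbf{A}\mathbf{X}$ that isolates $\mathcal{T}\setminus\hat{\mathcal{T}}_i$, so that Lemma~\ref{lem2_MMVFACS} can be applied to the disjoint pair $(\mathcal{T}\setminus\hat{\mathcal{T}}_i,\hat{\mathcal{T}}_i)$; one must upgrade that lemma from the vector $2$-norm to the Frobenius norm by squaring and summing the column inequalities, which is legitimate only when $1-2\delta_{R+K}\ge 0$ — guaranteed here because the right-hand side of the hypothesis must be positive, forcing $\delta_{R+K}<\tfrac12$ — and one must then strip off the noise term with the reverse triangle inequality without upsetting the $\eta_i$-scaling. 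The upper bound on $\|\mathbf{R}\|_F$ is comparatively routine, being essentially a re-use of \eqref{13}.
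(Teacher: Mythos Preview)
Your proposal is correct and follows essentially the same route as the paper: the same projector-plus-RIP upper bound on $\|\mathbf{R}\|_F$ combined with \eqref{13} (specialised to $\mathbf{X}=\mathbf{X}^K$), and the same Lemma~\ref{lem2_MMVFACS}/lower-RIP/reverse-triangle argument for $\|\mathbf{R}_i\|_F$, with both bounds expressed as multiples of $\|\mathbf{X}_{\Gamma^c,:}\|_F$ before invoking the hypothesis. Your explicit remark that the hypothesis forces $\delta_{R+K}<\tfrac12$, so that the columnwise lower bound of Lemma~\ref{lem2_MMVFACS} survives squaring and summing, is a detail the paper passes over silently.
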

\begin{proof} We have,
	\begin{align*}
	\left \|\mathbf{R}\right \|_F & = \left \|\mathbf{B}-\mathbf{A}_{\hat{\mathcal{T}}}\mathbf{A}_{\hat{\mathcal{T}}}^\dagger \mathbf{B}\right \|_F \\
	& = \left \|\mathbf{AX}+\mathbf{W}-\mathbf{A}_{\hat{\mathcal{T}}}\mathbf{A}_{\hat{\mathcal{T}}}^\dagger \left( \mathbf{AX}+\mathbf{W}\right)\right \|_F \quad \left(\because \mathbf{B}=\mathbf{AX}+\mathbf{W}\right)\\ 
	& = \left \|\mathbf{A}_{\hat{\mathcal{T}}}\mathbf{X}_\THatColon+\mathbf{A}_{\hat{\mathcal{T}}^c}\mathbf{X}_\THatCompColon+\mathbf{W} -\mathbf{A}_{\hat{\mathcal{T}}}\mathbf{A}_{\hat{\mathcal{T}}}^\dagger \left( \mathbf{A}_{\hat{\mathcal{T}}}\mathbf{X}_\THatColon+\mathbf{A}_{\hat{\mathcal{T}}^c}\mathbf{X}_\THatCompColon+\mathbf{W}\right)\right \|_F \\
	& = \left \|\mathbf{A}_{\hat{\mathcal{T}}^c}\mathbf{X}_\THatCompColon-\mathbf{A}_{\hat{\mathcal{T}}}\mathbf{A}_{\hat{\mathcal{T}}}^\dagger \mathbf{A}_{\hat{\mathcal{T}}^c}\mathbf{X}_\THatCompColon +\mathbf{W}-\mathbf{A}_{\hat{\mathcal{T}}}\mathbf{A}_{\hat{\mathcal{T}}}^\dagger \mathbf{W}\right \|_F  (\because \mathbf{A}_{\hat{\mathcal{T}}}^\dagger \mathbf{A}_{\hat{\mathcal{T}}}=\mathbf{I}) \\
	& \leq \left \|\mathbf{A}_{\hat{\mathcal{T}}^c}\mathbf{X}_\THatCompColon-\mathbf{A}_{\hat{\mathcal{T}}}\mathbf{A}_{\hat{\mathcal{T}}}^\dagger \mathbf{A}_{\hat{\mathcal{T}}^c}\mathbf{X}_\THatCompColon\right \|_F +\left \|\mathbf{W}-\mathbf{A}_{\hat{\mathcal{T}}}\mathbf{A}_{\hat{\mathcal{T}}}^\dagger \mathbf{W}\right \|_F \\
	& \leq \left \|\mathbf{A}_{\hat{\mathcal{T}}^c}\mathbf{X}_\THatCompColon\right \|_F + \left \|\mathbf{W}\right \|_F\quad \left(\text{Using Lemma~\ref{lem2_MMVFACS}}\right)\\
	& = \left \|\mathbf{A}_{\mathcal{T}\setminus\hat{\mathcal{T}}}\mathbf{X}_{\mathcal{T}\setminus\hat{\mathcal{T}}, \, :}\right \|_F + \left \|\mathbf{W}\right \|_F \quad \left(\because \mathcal{T}=\text{supp}(\mathbf{X})\right) \\
	& \leq \sqrt{1+\delta _{R+K}} \left \|\mathbf{X}_\THatCompColon\right \|_F + \left \|\mathbf{W}\right \|_F \quad \left(\because |\mathcal{T}\setminus\hat{\mathcal{T}}| \leq K \quad \& \quad \delta _K \leq \delta _{R+K}\right).
	\end{align*}
	
	\noindent Using \eqref{13} we have, 
	\begin{align}
	\left \|\mathbf{R}\right \|_F & \leq \sqrt{1+\delta _{R+K}} \left( \frac{1+\delta _{R+K}}{1-\delta _{R+K}}\left \|\mathbf{X}_\GammaCompColon\right \|_F + \frac{ 2}{{1-\delta _{R+K}}}  \left \|\mathbf{W}\right \|_F\right) + \left \|\mathbf{W}\right \|_F \nonumber \\
	& \leq \frac{\sqrt{1+\delta _{R+K}}}{1-\delta _{R+K}} \left( 1 + \delta _{R+K} \nonumber + 2 \zeta + \frac{1-\delta _{R+K}}{\sqrt{1+\delta _{R+K}}}\zeta\right) \left \|\mathbf{X}_\GammaCompColon\right \|_F \nonumber \\
	& \leq \frac{\sqrt{1+\delta _{R+K}}}{1-\delta _{R+K}} \left( 1 + \delta _{R+K} + 3 \zeta \right) \left \|\mathbf{X}_\GammaCompColon\right \|_F. \label{r}
	\end{align}
	
	\noindent Now, consider 
	\begin{align}
	\left \|\mathbf{R}_i\right \|_F & = \left \|\mathbf{B}-\mathbf{A}_{\hat{\mathcal{T}}_i}\mathbf{A}_{\hat{\mathcal{T}}_i}^\dagger \mathbf{B}\right \|_F \nonumber \\
	& = \left \|\mathbf{AX}+\mathbf{W}-\mathbf{A}_{\hat{\mathcal{T}}_i}\mathbf{A}_{\hat{\mathcal{T}}_i}^\dagger \left( \mathbf{AX}+\mathbf{W}\right)\right \|_F \quad (\because \mathbf{B}=\mathbf{AX}+\mathbf{W})\nonumber \\ 
	& =  \left \|\mathbf{A}_{\hat{\mathcal{T}}_i}\mathbf{X}_\THatiColon+\mathbf{A}_{\hat{\mathcal{T}}_i^c}\mathbf{X}_\THatiCompColon+\mathbf{W} -\mathbf{A}_{\hat{\mathcal{T}}_i}\mathbf{A}_{\hat{\mathcal{T}}_i}^\dagger \left( \mathbf{A}_{\hat{\mathcal{T}}_i}\mathbf{X}_\THatiColon+\mathbf{A}_{\hat{\mathcal{T}}_i^c}\mathbf{X}_\THatiCompColon+\mathbf{W}\right) \right\|_F \nonumber \\
	& = \left \|\mathbf{A}_{\hat{\mathcal{T}}_i^c}\mathbf{X}_\THatiCompColon+\mathbf{W} -\mathbf{A}_{\hat{\mathcal{T}}_i}\mathbf{A}_{\hat{\mathcal{T}}_i}^\dagger \mathbf{A}_{\hat{\mathcal{T}}_i^c}\mathbf{X}_\THatiCompColon -\mathbf{A}_{\hat{\mathcal{T}}_i}\mathbf{A}_{\hat{\mathcal{T}}_i}^\dagger \mathbf{W}\right \|_F \nonumber\\
	& \geq \left \|\mathbf{A}_{\hat{\mathcal{T}}_i^c}\mathbf{X}_\THatiCompColon-\mathbf{A}_{\hat{\mathcal{T}}_i}\mathbf{A}_{\hat{\mathcal{T}}_i}^\dagger \mathbf{A}_{\hat{\mathcal{T}}_i^c}\mathbf{X}_\THatiCompColon\right \|_F-\left \|\mathbf{W}-\mathbf{A}_{\hat{\mathcal{T}}_i}\mathbf{A}_{\hat{\mathcal{T}}_i}^\dagger \mathbf{W}\right \|_F \nonumber \\ 
	& \quad (\text{Using reverse triangle inequality}) \nonumber \\
	& = \left \|\mathbf{A}_{\mathcal{T}\setminus \hat{\mathcal{T}}_i }\mathbf{X}_{\mathcal{T}\setminus \hat{\mathcal{T}}_i,\,: } - \mathbf{A}_{\hat{\mathcal{T}}_i }\mathbf{A}_{\hat{\mathcal{T}}_i }^\dagger \mathbf{A}_{\mathcal{T}\setminus \hat{\mathcal{T}}_i }\mathbf{X}_{\mathcal{T}\setminus \hat{\mathcal{T}}_i,\,: }\right \|_F -\left \|\mathbf{W}-\mathbf{A}_{\hat{\mathcal{T}}_i}\mathbf{A}_{\hat{\mathcal{T}}_i}^\dagger \mathbf{W}\right \|_F \nonumber \\
	& \geq \left( 1- \frac{\delta _{R+K}}{1-\delta _{R+K}} \right) \left \|\mathbf{A}_{\mathcal{T}\setminus \hat{\mathcal{T}}_i }\mathbf{X}_{\mathcal{T}\setminus \hat{\mathcal{T}}_i,\,: } \right \|_F - \left \|\mathbf{W}\right \|_F \nonumber \\
	& \quad (\text{Using Lemma~\ref{lem2_MMVFACS}} \text{ \& } \delta _{2K} \leq \delta _{R+K}) \nonumber \\
	& \geq \frac{1-2 \delta_{R+K}}{1-\delta _{R+K}} \sqrt{1-\delta _{R+K}} \left \|\mathbf{X}_\THatiCompColon\right \|_F - \left \|\mathbf{W}\right \|_F (\because |\mathcal{T}\setminus \hat{\mathcal{T}}| \leq K \quad \& \quad \delta _K \leq \delta _{R+K}) \nonumber \\ 
	& = \left( \frac{1-2 \delta _{R+K}}{\eta _i \sqrt{1-\delta _{R+K}}} -\zeta \right) \left \|\mathbf{X}_\GammaCompColon\right \|_F. \label{ri}
	\end{align}
	
	\noindent From \eqref{r} and \eqref{ri} we get a sufficient condition for $\left \|\mathbf{R}\right \|_F \leq \left \|\mathbf{R}_i\right \|_F$ as 
	\begin{align}
	\label{eqn:Prop2}
	\frac{\sqrt{1+\delta _{R+K}}}{1-\delta _{R+K}} ( 1 + \delta _{R+K} + 3 \zeta) \leq \left( \frac{1-2 \delta _{R+K}}{\eta _i \sqrt{1-\delta _{R+K}}} -\zeta \right). 
	\end{align}
	
	\noindent Thus, if \eqref{eqn:Prop2} is satisfied, {MMV-FACS} produces a smaller residual matrix (in the Frobenius norm sense) than that of the $i^\text{th}$ participating algorithm.
\end{proof}
\subsection{Average Case Analysis}
Intuitively, we expect multiple measurement vector problem to perform better than the single measurement vector case. However, if each measurement vector is the same, i.e., in the worst case, we have $\mathbf{x}^{(i)}=\mathbf{c}, \; \forall \; i=1,\ldots,L$, then we do not have any additional information on $\mathbf{X}$ than that provided by a single measurement vector $\mathbf{x}^{(1)}$. So far we have carried out only the worst case analysis, i.e., conditions under which the algorithm is able to recover any joint sparse matrix \textbf{X}. This approach does not provide insight into the superiority of sparse signal reconstruction with multiple measurement vectors compared to the single measurement vector case. 

To notice a performance gain with multiple measurement vectors, next we proceed with an average case analysis. Here we impose a probability model on the $K$ sparse $\mathbf{X}$ as suggested by Remi {\em et al.}~\cite{Remi2008AtomsOfAll}. In particular, on the support-set $\mathcal{T}$, we impose that $\mathbf{X}_\TColon=\Sigma \Phi$, where $\Sigma$ is a $K \times K$ diagonal matrix with positive diagonal entries and $\Phi$ is a $ K \times L$ random matrix with {i.i.d.}\ Gaussian entries. Our goal is to show that, under this signal model, the typical behaviour of {MMV-FACS} is better than in the worst case. 
\allowdisplaybreaks
\begin{thm}
	Consider the {MMV-FACS} setup discussed in Section~\ref{Sec:MMV_FACS_Problem}. Assume a Gaussian signal model, i.e., $\mathbf{X}_\TColon=\Sigma \Phi$, where $\Sigma$ is a $K \times K$ diagonal matrix with positive diagonal entries and $\Phi$ is a $ K \times L$ random matrix with {i.i.d.}\ Gaussian entries. Let $\mathbf{e}_i$ denote a $|\Gamma| \times 1$ vector with a `1' in the $i^\text{th}$ coordinate and `0' elsewhere. Let $\eta = \displaystyle\esupmin_{i \in (\mathcal{T} \cap \Gamma)} \left \|\mathbf{e}_i ^T \mathbf{A}_{\Gamma} ^{\dagger}\mathbf{W}\right \|_2 +  \esupmax_{j \in (\Gamma \setminus \mathcal{T})} \left \|\mathbf{e}_j ^T \mathbf{A}_{\Gamma} ^{\dagger}\mathbf{W}\right \|_2 $ and 
	\begin{align*}
	\gamma = \frac{\displaystyle \esupmin_{i \in (\mathcal{T} \cap \Gamma)} \left \|\mathbf{e}_i ^T \mathbf{A}_{\Gamma} ^{\dagger} \A_\T \Sigma\right \|_2 - \esupmax_{j \in (\Gamma \setminus \mathcal{T})} \left \|\mathbf{e}_j ^T \mathbf{A}_{\Gamma} ^{\dagger} \A_\T \Sigma \right \|_2 - \dfrac{\eta}{C_2(L)}}{\displaystyle \esupmin_{i \in (\mathcal{T} \cap \Gamma)} \left \|\mathbf{e}_i ^T \mathbf{A}_{\Gamma} ^{\dagger}\A_\T \Sigma\right \|_2 +  \esupmax_{j \in (\Gamma \setminus \mathcal{T})} \left \|\mathbf{e}_j ^T \mathbf{A}_{\Gamma} ^{\dagger}\A_\T \Sigma\right \|_2}.
	\end{align*}
	where $C_2(L)=\E \left \|Z\right \|_2$ with $Z=(Z_1, \ldots, Z_L) $ being a vector of independent standard normal variables. Assume that $\displaystyle \esupmin_{i \in (\mathcal{T} \cap \Gamma)} \left \|\mathbf{e}_i ^T \mathbf{A}_{\Gamma}^{\dagger}\A_\T \Sigma\right \|_2 - \esupmax_{j \in (\Gamma \setminus \mathcal{T})} \left \|\mathbf{e}_j ^T \mathbf{A}_{\Gamma} ^{\dagger} \A_\T \Sigma\right \|_2 > \dfrac{\eta}{C_2(L)}$. Let $\Theta$ denote the event that {MMV-FACS} picks all correct indices from the union-set $\Gamma$. Then, we have, 
	\begin{align*}
	P(\Theta) \geq 1-K \exp (-2 A_2(L) \gamma ^2),
	\end{align*}
	where $\displaystyle A_2(L)= \left(\frac{\ddot{\Gamma}(\frac{L+1}{2})}{\ddot{\Gamma}(\frac{L}{2})}\right)^2 \approx \frac{L}{2}$,  $\ddot{\Gamma}(\cdot)$ denotes the Gamma function.
\end{thm}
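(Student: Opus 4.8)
\noindent\emph{Proof outline.}\ The plan is to analyse the least-squares matrix $\V_\GammaColon=\A_\Gamma^\dagger\B$ of Step~\ref{AlgStep:v_Gamma_MMV} row by row and to show that, under the Gaussian model, every true atom lying in $\Gamma$ produces a row of $\V$ whose $\ell_2$-norm strictly exceeds that of every spurious atom in $\Gamma$, so that the $K$-row truncation in Step~\ref{AlgStep:T_MMV} keeps all of $\T\cap\Gamma$. Substituting the signal model, since $\X$ is supported on $\T$ with $\X_\TColon=\Sigma\Phi$ and $\X_{\T^c,:}=\0$, we get $\A\X=\A_\T\Sigma\Phi$, hence $\V_\GammaColon=\A_\Gamma^\dagger\A_\T\Sigma\,\Phi+\A_\Gamma^\dagger\W$, and the $i^\text{th}$ row (in the $\Gamma$-ordering) is $\e_i^T\V_\GammaColon=(\e_i^T\A_\Gamma^\dagger\A_\T\Sigma)\Phi+\e_i^T\A_\Gamma^\dagger\W$ --- a Gaussian part plus a deterministic noise part. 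The crucial distributional observation is that for any fixed row vector $\mathbf{p}^T$ the image $\mathbf{p}^T\Phi$ has the law of $\vecLTwoNorm{\mathbf{p}}\,Z$ with $Z=(Z_1,\dots,Z_L)$ standard normal, so $\vecLTwoNorm{\mathbf{p}^T\Phi}$ is a scaled chi variable with mean $\vecLTwoNorm{\mathbf{p}}\,C_2(L)$.

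Next I would reduce the event $\Theta$ to row-norm comparisons. Since Step~\ref{AlgStep:T_MMV} retains the $K$ rows of largest $\ell_2$-norm and $|\T\cap\Gamma|\le|\T|=K$, the event $\Theta$ is implied by $\vecLTwoNorm{\e_i^T\V_\GammaColon}>\vecLTwoNorm{\e_j^T\V_\GammaColon}$ for every $i\in\T\cap\Gamma$ and every $j\in\Gamma\setminus\T$. Bounding the ``correct'' rows from below by the reverse triangle inequality and the ``spurious'' rows from above by the triangle inequality, and isolating the deterministic terms $\e_i^T\A_\Gamma^\dagger\W$, this comparison holds once $\vecLTwoNorm{(\e_i^T\A_\Gamma^\dagger\A_\T\Sigma)\Phi}$ is not too small for each correct $i$ and $\vecLTwoNorm{(\e_j^T\A_\Gamma^\dagger\A_\T\Sigma)\Phi}$ is not too large for each spurious $j$, the residual slack being exactly the noise quantity $\eta$ of the statement. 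Normalising by $\vecLTwoNorm{\e_i^T\A_\Gamma^\dagger\A_\T\Sigma}$ and $\vecLTwoNorm{\e_j^T\A_\Gamma^\dagger\A_\T\Sigma}$ turns this into the requirement that the associated chi variables deviate from $C_2(L)$ by a relative amount below $\gamma$; the definition of $\gamma$ is reverse-engineered from precisely this inequality, and the hypothesis $\min_i\vecLTwoNorm{\e_i^T\A_\Gamma^\dagger\A_\T\Sigma}-\max_j\vecLTwoNorm{\e_j^T\A_\Gamma^\dagger\A_\T\Sigma}>\eta/C_2(L)$ is exactly what makes $\gamma>0$.

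Finally I would invoke a sharp tail bound for the chi distribution, $P\big(\,|\vecLTwoNorm{Z}-C_2(L)|\ge\gamma C_2(L)\,\big)\le 2\exp(-2A_2(L)\gamma^2)$ with $A_2(L)=\big(\ddot{\Gamma}(\tfrac{L+1}{2})/\ddot{\Gamma}(\tfrac{L}{2})\big)^2=C_2(L)^2/2\approx L/2$ (a Laurent--Massart-type, i.e.\ Gaussian-Lipschitz, estimate for the Euclidean norm; only the one-sided version is needed for the correct rows), and then take a union bound over the at most $K$ true atoms in $\Gamma$ --- the spurious rows being controlled inside the same event or, when $\T\subseteq\Gamma$, being deterministic. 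This yields $P(\Theta)\ge 1-K\exp(-2A_2(L)\gamma^2)$.

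The step I expect to require the most care is the second one: funnelling all the deterministic noise terms into the single scalar $\eta$ and all the geometry into the single tolerance $\gamma$ so that one global concentration event suffices, keeping the union bound cost at a factor $K$ rather than $|\T\cap\Gamma|\cdot|\Gamma\setminus\T|$. Tracking the exponent $2A_2(L)$ rather than the generic $A_2(L)$ also forces use of the precise chi-distribution tail in place of the crude $\exp(-t^2/2)$ Gaussian concentration bound.
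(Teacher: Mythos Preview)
Your plan mirrors the paper's proof closely: both analyse the rows of $\V_\GammaColon=\A_\Gamma^\dagger\B$, reduce $\Theta$ to the comparison $\min_{i\in\T\cap\Gamma}\vecLTwoNorm{\e_i^T\V_\GammaColon}>\max_{j\in\Gamma\setminus\T}\vecLTwoNorm{\e_j^T\V_\GammaColon}$, peel off the noise via the triangle and reverse-triangle inequalities to isolate the scalar $\eta$, and then invoke chi-variable concentration together with a union bound. The one device the paper makes more explicit than you do is an intermediate threshold $C$: it splits the complement of $\Theta$ into the two events $\{\min_i\le C\}$ and $\{\max_j\ge C-\eta\}$, applies the one-sided concentration estimates (5.3) and (5.5) of the cited average-case reference to obtain $|\T|\exp(-A_2(L)\epsilon_1^2)$ and $|\T|\exp(-A_2(L)\epsilon_2^2)$ respectively, and then sets $\epsilon_1=\epsilon_2$ and solves for the common value, which turns out to be exactly $\gamma$. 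Your remark that ``$\gamma$ is reverse-engineered from precisely this inequality'' is this equating step, so the substance matches.

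Two small points of divergence. First, your union-bound accounting is slightly loose: the spurious rows $j\in\Gamma\setminus\T$ are \emph{not} deterministic unless $\T\subseteq\Gamma$, and the paper treats the $\max_j$ side with its own one-sided concentration and union bound before adding the two failure probabilities---it does not fold both tails into ``a single global concentration event'' indexed only by the $K$ true atoms as you suggest. Second, the paper does not appeal to a sharper two-sided chi tail to produce the $2A_2(L)$ in the exponent; it uses the two one-sided bounds with exponent $A_2(L)\gamma^2$ each and simply combines them in the last display, so your worry about needing a Laurent--Massart refinement is unnecessary.
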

\begin{proof} We have, 
	\begin{align}
	P(\Theta) & = P\left(\esupmin_{i \in (\mathcal{T} \cap \Gamma)} \left \|\mathbf{e}_i ^T \mathbf{A}_{\Gamma} ^{\dagger} \mathbf{B}\right \|_2 > \esupmax_{j \in (\Gamma \setminus \mathcal{T})} \left \|\mathbf{e}_j ^T \mathbf{A}_{\Gamma} ^{\dagger} \mathbf{B}\right \|_2\right) \nonumber \\
	& = P\left(\esupmin_{i \in (\mathcal{T} \cap \Gamma)} \left \|\mathbf{e}_i ^T \mathbf{A}_{\Gamma} ^{\dagger} (\mathbf{A}_{\mathcal{T}}\mathbf{X}_\TColon+\mathbf{W})\right \|_2 > \esupmax_{j \in (\Gamma \setminus \mathcal{T})} \left \|\mathbf{e}_j ^T \mathbf{A}_{\Gamma} ^{\dagger}  (\mathbf{A}_{\mathcal{T}}\mathbf{X}_\TColon+\mathbf{W})\right \|_2 \right) \nonumber \\
	& > P\left(\esupmin_{i \in (\mathcal{T} \cap \Gamma)} \left( \left \|\mathbf{e}_i ^T \mathbf{A}_{\Gamma} ^{\dagger} \mathbf{A}_{\mathcal{T}}\mathbf{X}_\TColon\right \|_2 - \left \|\mathbf{e}_i ^T \mathbf{A}_{\Gamma} ^{\dagger} \mathbf{W}\right \|_2 \right) \nonumber \right.\\
	& \left. \quad \geq \esupmax_{j \in (\Gamma \setminus \mathcal{T})} \; \left( \left \|\mathbf{e}_j ^T \mathbf{A}_{\Gamma} ^{\dagger}  \mathbf{A}_{\mathcal{T}}\mathbf{X}_\TColon\right \|_2+\left \|\mathbf{e}_j ^T \mathbf{A}_{\Gamma} ^{\dagger}\mathbf{W}\right \|_2 \right) \right) \nonumber \\
	& (\text{Using reverse triangle inequality and triangle inequality respectively}) \nonumber \\
	& = P\Bigl(\esupmin_{i \in (\mathcal{T} \cap \Gamma)} \left \|\mathbf{e}_i ^T \mathbf{A}_{\Gamma} ^{\dagger} \mathbf{A}_{\mathcal{T}}\mathbf{X}_\TColon\right \|_2 - \esupmax_{j \in (\Gamma \setminus \mathcal{T})} \; \left \|\mathbf{e}_j ^T \mathbf{A}_{\Gamma} ^{\dagger}  \mathbf{A}_{\mathcal{T}}\mathbf{X}_\TColon\right \|_2 \nonumber \\
	& \quad > \esupmin_{i \in (\mathcal{T} \cap \Gamma)} \left \|\mathbf{e}_i ^T \mathbf{A}_{\Gamma} ^{\dagger}\mathbf{W}\right \|_2 +  \esupmax_{j \in (\Gamma \setminus \mathcal{T})} \left \|\mathbf{e}_j ^T \mathbf{A}_{\Gamma} ^{\dagger}\mathbf{W}\right \|_2 \Bigr) \nonumber \\
	& = P\left( \esupmin_{i \in (\mathcal{T} \cap \Gamma)} \left \|\mathbf{e}_i ^T \mathbf{A}_{\Gamma} ^{\dagger} \mathbf{A}_{\mathcal{T}}\mathbf{X}_\TColon\right \|_2 - \esupmax_{j \in (\Gamma \setminus \mathcal{T})} \; \left \|\mathbf{e}_j ^T \mathbf{A}_{\Gamma} ^{\dagger}  \mathbf{A}_{\mathcal{T}}\mathbf{X}_\TColon\right \|_2 > \eta \right) \nonumber\\
	& = 1- P\left( \esupmin_{i \in (\mathcal{T} \cap \Gamma)} \left \|\mathbf{e}_i ^T \mathbf{A}_{\Gamma} ^{\dagger} \mathbf{A}_{\mathcal{T}}\mathbf{X}_\TColon\right \|_2 - \esupmax_{j \in (\Gamma \setminus \mathcal{T})} \; \left \|\mathbf{e}_j ^T \mathbf{A}_{\Gamma} ^{\dagger}  \mathbf{A}_{\mathcal{T}}\mathbf{X}_\TColon\right \|_2 \leq \eta \right) \nonumber\\
	& \geq 1-P\left(\esupmin_{i \in (\mathcal{T} \cap \Gamma)} \left \|\mathbf{e}_i ^T \mathbf{A}_{\Gamma} ^{\dagger} \mathbf{A}_{\mathcal{T}}\mathbf{X}_\TColon\right \|_2 \leq C \right) \nonumber \\
	& \hspace{0.7cm} - P\left(\esupmax_{j \in (\Gamma \setminus \mathcal{T})} \; \left \|\mathbf{e}_j ^T \mathbf{A}_{\Gamma} ^{\dagger}  \mathbf{A}_{\mathcal{T}}\mathbf{X}_\TColon\right \|_2 \geq C- \eta\right). \label{tpr}
	\end{align}
	Now, let us derive an upper bound for  $\displaystyle P\left(\esupmin_{i \in (\mathcal{T} \cap \Gamma)} \left \|\mathbf{e}_i ^T \mathbf{A}_{\Gamma} ^{\dagger} \mathbf{A}_{\mathcal{T}}\mathbf{X}_\TColon\right \|_2 \leq C \right)$. Influenced by the concentration of measure results in \cite{Remi2008AtomsOfAll}, we set 
	\begin{align}
	\label{eqn:MMV_concInequality}
	C=(1-\epsilon _1) C_2(L) \esupmin_{i \in (\mathcal{T} \cap \Gamma)} \left \|\mathbf{e}_{i}^{T} \mathbf{A}_{\Gamma}^{\dagger} \mathbf{A}_{\mathcal{T}} \Sigma \right \|_2,
	\end{align} 
	where $0 < \epsilon _1 <1$.
	
	\noindent Using (5.5) in \cite{Remi2008AtomsOfAll}, we get,
	\begin{align}
	P\left(\esupmin_{i \in (\mathcal{T} \cap \Gamma)} \left \|\mathbf{e}_i ^T \mathbf{A}_{\Gamma} ^{\dagger} \mathbf{A}_{\mathcal{T}}\mathbf{X}_\TColon\right \|_2 \leq C \right) \leq |\mathcal{T}| \exp(-A_2(L) \epsilon _1^2). \label{fpr}
	\end{align}
	
	\noindent To bound the second probability, consider
	\begin{align*}
	& P\left(\esupmax_{j \in (\Gamma \setminus \mathcal{T})} \left \|\mathbf{e}_j ^T \mathbf{A}_{\Gamma} ^{\dagger} \mathbf{A}_{\mathcal{T}}\Sigma \Phi\right \|_2 \geq C-\eta \right) \\*
	& = P\left(\esupmax_{j \in (\Gamma \setminus \mathcal{T})} \left \|\mathbf{e}_j ^T \mathbf{A}_{\Gamma} ^{\dagger} \mathbf{A}_{\mathcal{T}}\Sigma \Phi\right \|_2 \geq (C-\eta) \frac{ C_2(L) \displaystyle \esupmax_{j \in (\Gamma \setminus \mathcal{T})} \left \|\mathbf{e}_{j}^{T} \mathbf{A}_{\Gamma}^{\dagger} \mathbf{A}_{\mathcal{T}} \Sigma \right \|_2}{C_2(L) \displaystyle \esupmax_{j \in (\Gamma \setminus \mathcal{T})} \left \|\mathbf{e}_{j}^{T} \mathbf{A}_{\Gamma}^{\dagger} \mathbf{A}_{\mathcal{T}} \Sigma \right \|_2} \right).
	\end{align*}
	Let
	\begin{align}
	\label{eqn:MMV_epsilon2}
	& 1+\epsilon_2=\frac{C-\eta}{C_2(L) \displaystyle \esupmax_{j \in (\Gamma \setminus \mathcal{T})} \left \|\mathbf{e}_{j}^{T} \mathbf{A}_{\Gamma}^{\dagger} \mathbf{A}_{\mathcal{T}} \Sigma \right \|_2}
	\end{align}
	
	\noindent Using equation (5.3) in \cite{Remi2008AtomsOfAll} 
	\begin{align}
	& P\left( \esupmax_{j \in (\Gamma \setminus \mathcal{T})} \left \|\mathbf{e}_j ^T \mathbf{A}_{\Gamma} ^{\dagger} \mathbf{A}_{\mathcal{T}}\Sigma \Phi\right \|_2 \geq (1+\epsilon _2) C_2(L) \esupmax_{j \in (\Gamma \setminus \mathcal{T})}\left \|\mathbf{e}_{j}^{T} \mathbf{A}_{\Gamma}^{\dagger} \mathbf{A}_{\mathcal{T}} \Sigma \right \|_2 \right) \nonumber \\
	&\hspace{3cm} \leq |\mathcal{T}| \exp\left(-A_2(L) \epsilon _2^2\right). \label{spr}
	\end{align}

	\noindent For the above inequality to hold, it is required that $\epsilon _2 > 0$. 
	By setting $\epsilon_2 = \epsilon_1$, and using \eqref{eqn:MMV_concInequality} and \eqref{eqn:MMV_epsilon2}, we get
	\begin{align*}
	\epsilon_1 = \frac{(1-\epsilon_1)C_2(L) \min_{i\in (\T \cap \Gamma)} \vecLTwoNorm{\e_i^T \A_\Gamma^\dagger \A_\T \Sigma}-\eta }{C_2(L)\displaystyle\max_{j\in(\Gamma\setminus\T)}\vecLTwoNorm{\e_j^T \A_\Gamma^\dagger \A_\T \Sigma}} -1.
	\end{align*}
	Now, solving for $\epsilon$, we get

	\begin{align*}
	\epsilon_1 = \frac{\displaystyle \min_{i\in (\T \cap \Gamma)} \vecLTwoNorm{\e_i^T \A_\Gamma^\dagger \A_\T \Sigma}-\displaystyle\max_{j\in(\Gamma\setminus\T)}\vecLTwoNorm{\e_j^T \A_\Gamma^\dagger \A_\T \Sigma} - \frac{\eta}{C_2(L)}    }  {\displaystyle \min_{i\in (\T \cap \Gamma)} \vecLTwoNorm{\e_i^T \A_\Gamma^\dagger \A_\T \Sigma} + \displaystyle\max_{j\in(\Gamma\setminus\T)}\vecLTwoNorm{\e_j^T \A_\Gamma^\dagger \A_\T \Sigma} }.	
	\end{align*}
	
	\noindent Clearly $\epsilon _1 < 1$ and by the assumption in the theorem $\epsilon _1 > 0$. Hence we have $0 < \epsilon_1 < 1$. Also, note that $\gamma = \epsilon _1$. Substituting \eqref{fpr} and \eqref{spr} in \eqref{tpr}, we get 
	\begin{align*}
	P(\Theta) \geq 1-K \exp (-2 A_2(L) \gamma ^2).
	\end{align*}
\end{proof}
\noindent Since $A_2(L)\approx \displaystyle \frac{L}{2}$, the probability that {MMV-FACS} selects all correct indices from the union set increases as $L$ increases. Thus, more than one measurement vector improves the performance.

\section{Numerical Experiments and Results}
We conducted numerical experiments using synthetic data and real signals to evaluate the performance of {MMV-FACS}. The performance is evaluated using {ASRER} which is defined as 
\begin{align}
\label{eqn:MMV_ASRER}
\text{{ASRER}}= \frac{\sum _{j=1}^{n_{trials}}\left \|\mathbf{X}_j\right \|_F^2 }{\sum _{j=1}^{n_{trials}} \left \|\mathbf{X}_j-\hat{\mathbf{X}}_j\right \|_F^2 },
\end{align}
where $\mathbf{X}_j$ and $\hat{\mathbf{X}}_j$ denote the actual and reconstructed jointly sparse signal matrix in the $j^\text{th}$ trial respectively, and $n_{trials}$ denotes the total number of trials. 

\subsection{Synthetic Sparse Signals}
For noisy measurement simulations, we define the {SMNR} as 
\begin{equation*}
\text{{SMNR}} \triangleq \frac{\E\{ \left \|\mathbf{x}^{(i)}\right \|_2^2\}}{\E \{ \left \|\mathbf{w}^{(i)}\right \|_2^2\}},
\end{equation*}
where $\E\{\cdot\}$ denotes the mathematical expectation operator. The simulation set-up is described below.
\subsubsection{Experimental Setup} 
\label{cases}
Following steps are involved in the simulation: 
\begin{enumerate}
	\item Generate elements of $\mathbf{A}_{M \times N}$ independently from $\mathcal{N}(0,\frac{1}{M})$ and normalize each column norm to unity. 
	\item Choose $K$ non-zero locations uniformly at random from the set $\{1,2,\ldots,N\}$ and fill those K rows of $\mathbf{X}$ based on the choice of signal characteristics:
	\begin{enumerate} 
		\item \emph{Gaussian sparse signal matrix}: Non-zero values independently from $\mathcal{N}(0,1)$. 	
		\item \emph{Rademacher sparse signal matrix}: Non-zero values are set to +1 or -1 with probability $\frac{1}{2}$. 
	\end{enumerate}
	Remaining $N-K$ rows of $\mathbf{X}$ are set to zero.
	\item The {MMV} measurement matrix $\mathbf{B}$ is computed as $\mathbf{B}=\mathbf{A}\mathbf{X}+\mathbf{W}$, where the columns of $\mathbf{W}$, $\mathbf{w}^{(i)}$'s are independent and their elements are {i.i.d.} as Gaussian with variance determined from the specified {SMNR}.
	\item Apply the {MMV} sparse recovery method.
	\item Repeat steps i-iv, $S$ times. 
	\item Find {ASRER} using \eqref{eqn:MMV_ASRER}.
\end{enumerate}
\begin{figure}[!htb]
	\centering
	\subfigure[]
	{
		\includegraphics[width=2.75in]{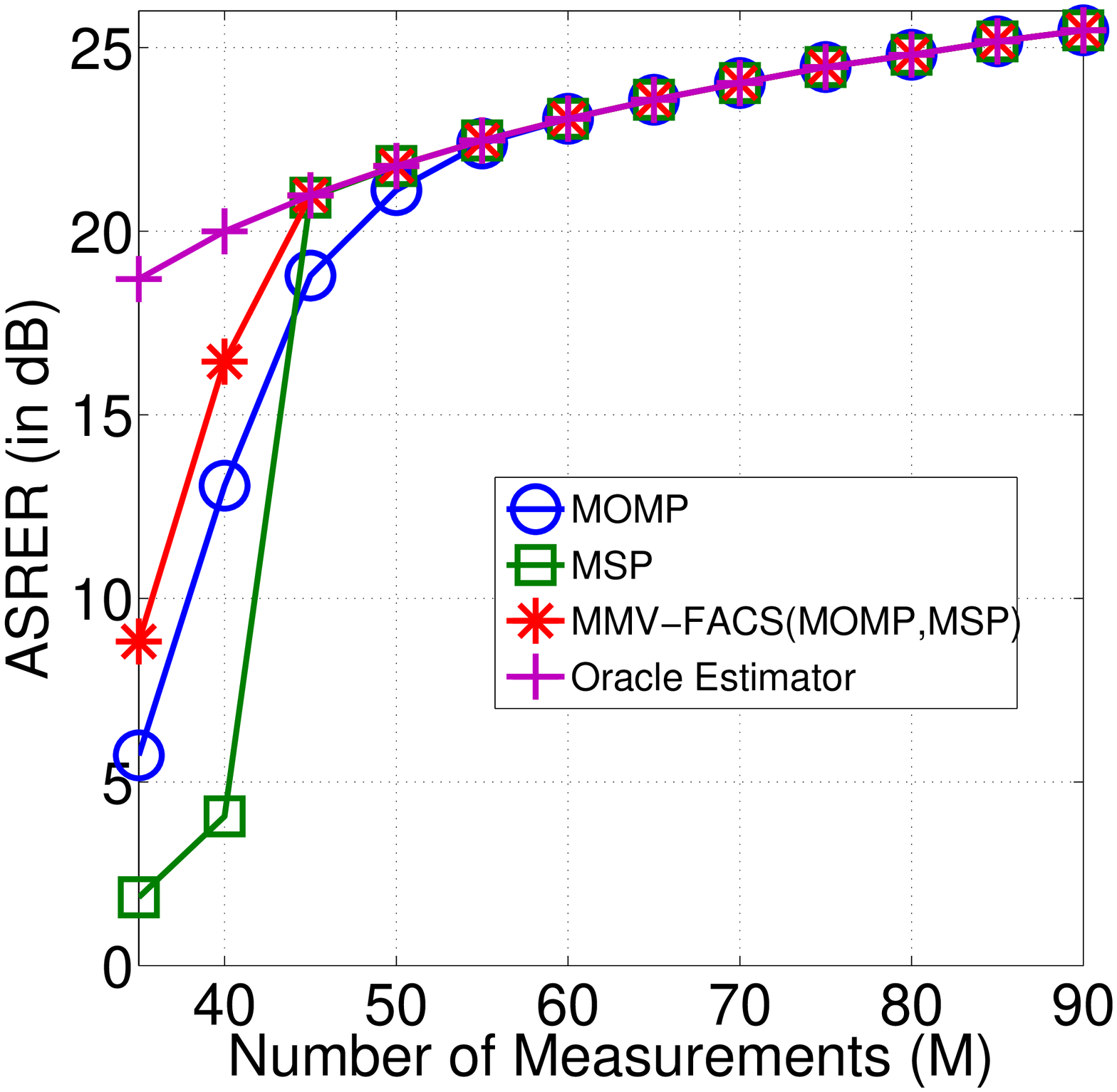}
	}
	\subfigure[]
	{
		\includegraphics[width=2.75in]{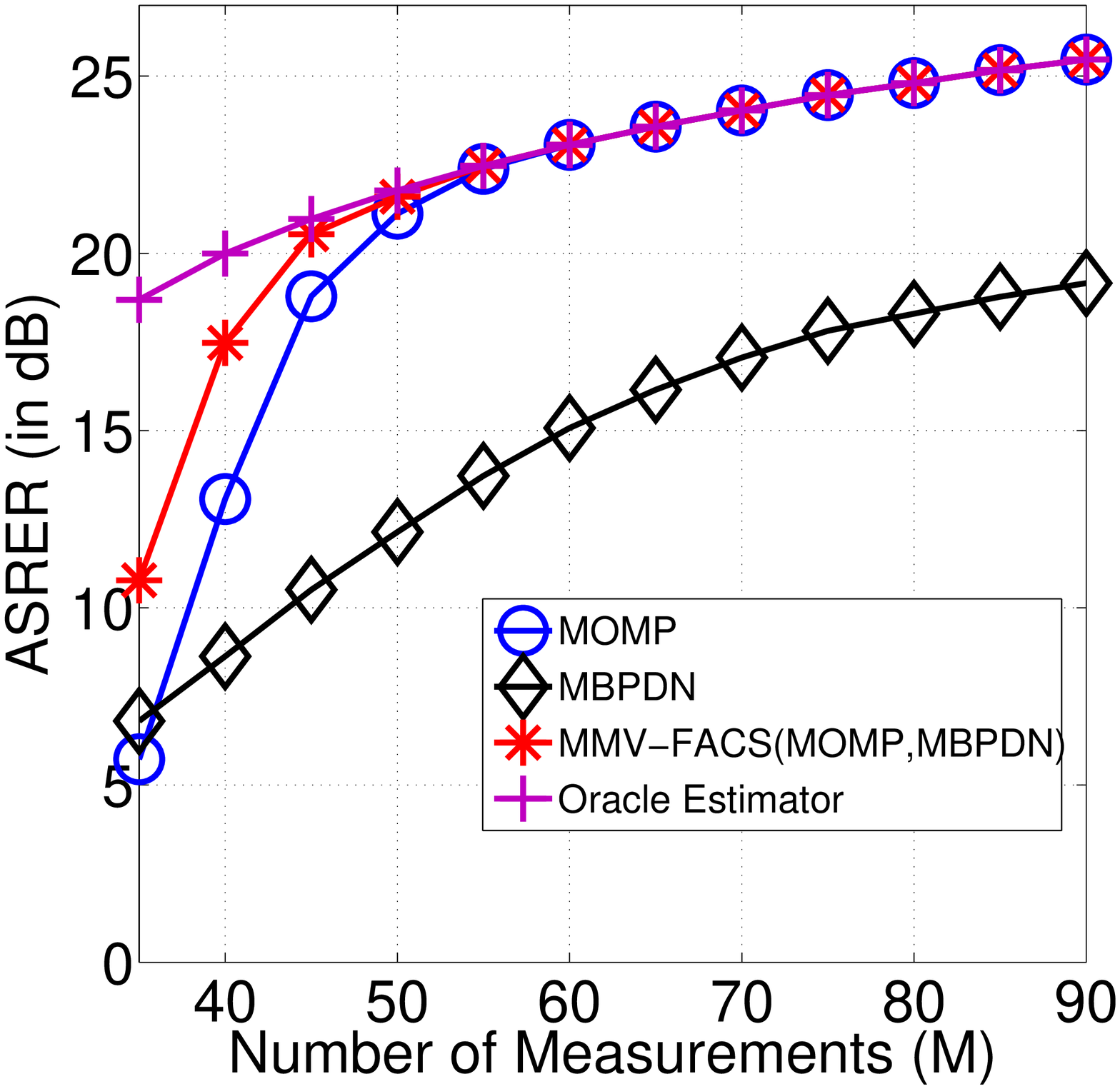}
	} \\
	\subfigure[]
	{
		\includegraphics[width=2.75in]{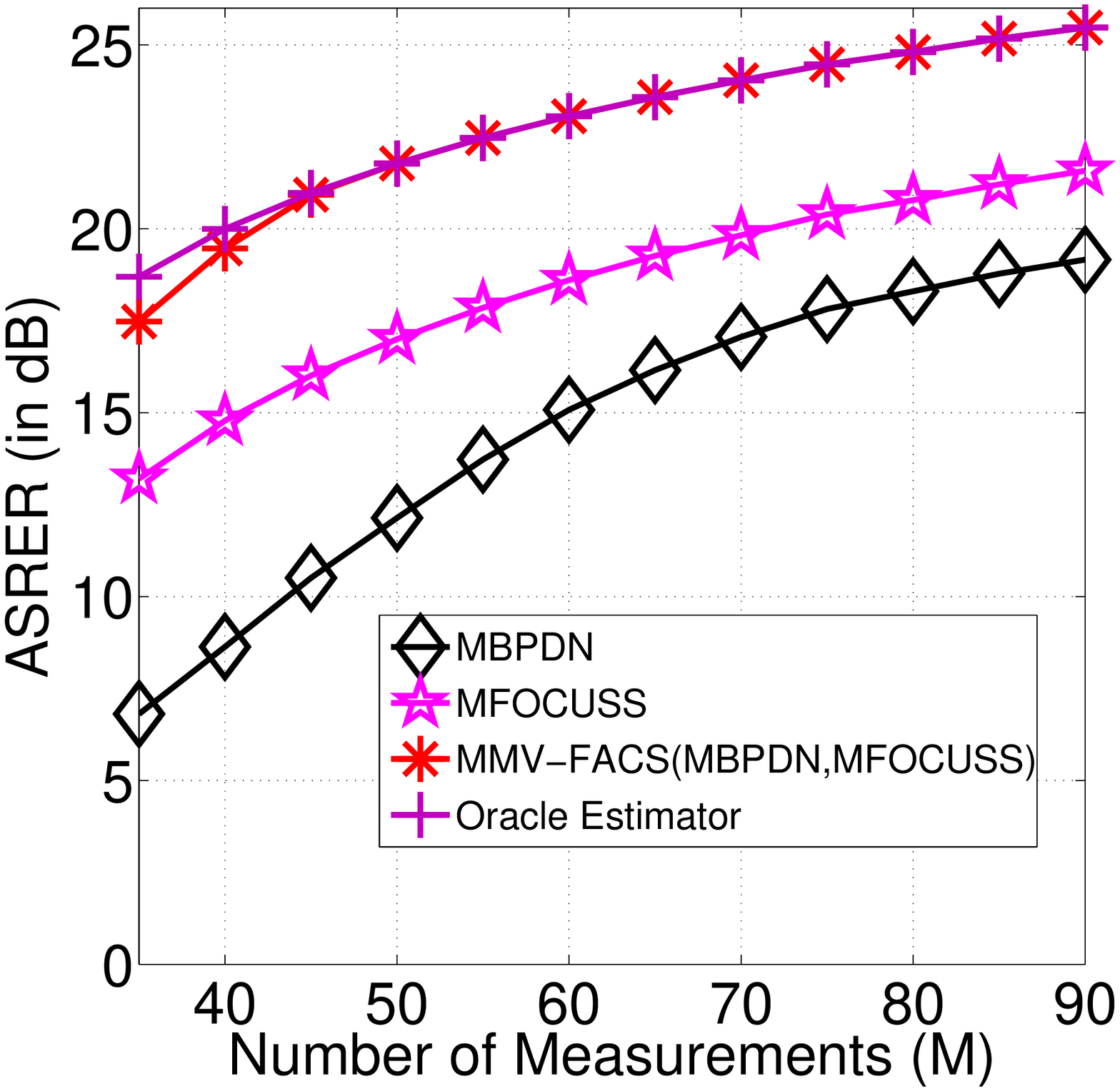}
	}
	\subfigure[]
	{
		\includegraphics[width=2.75in]{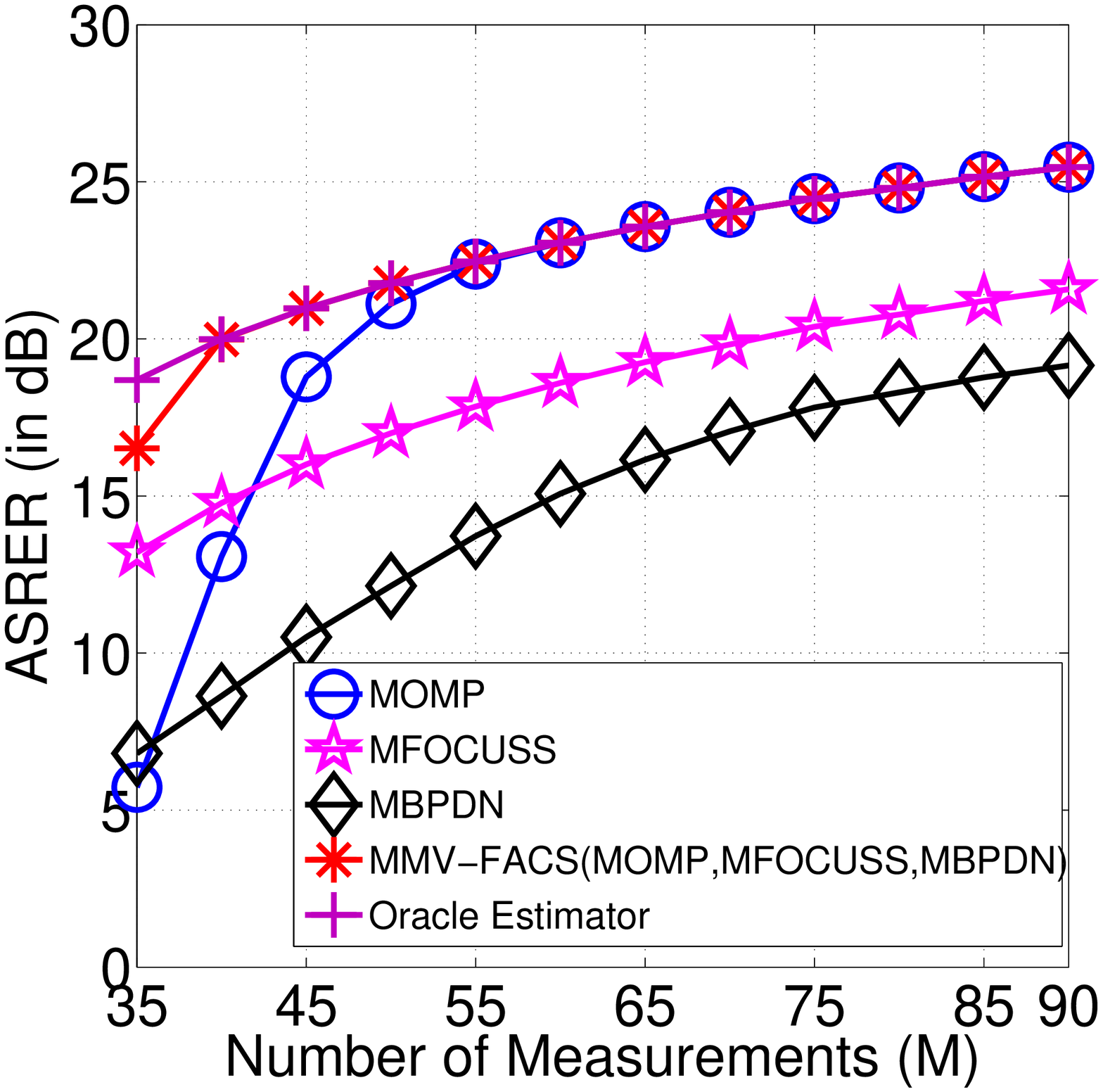}
	}
	\caption[Performance of MMV-FACS for Gaussian sparse signal matrices, varying the number of measurements ($M$)]{Performance of MMV-FACS, averaged over $1,000$ trials, for Gaussian sparse signal matrices with $\text{{SMNR}}=20$~dB. Sparse signal dimension $N=500$, sparsity level $K=20$, and number of measurement vectors $L=20$.}
	\label{Fig:one}
\end{figure}
\subsubsection{Results and Discussions}
\begin{figure}[!htb]
	\centering
	\subfigure[]
	{
		\includegraphics[width=2.75in]{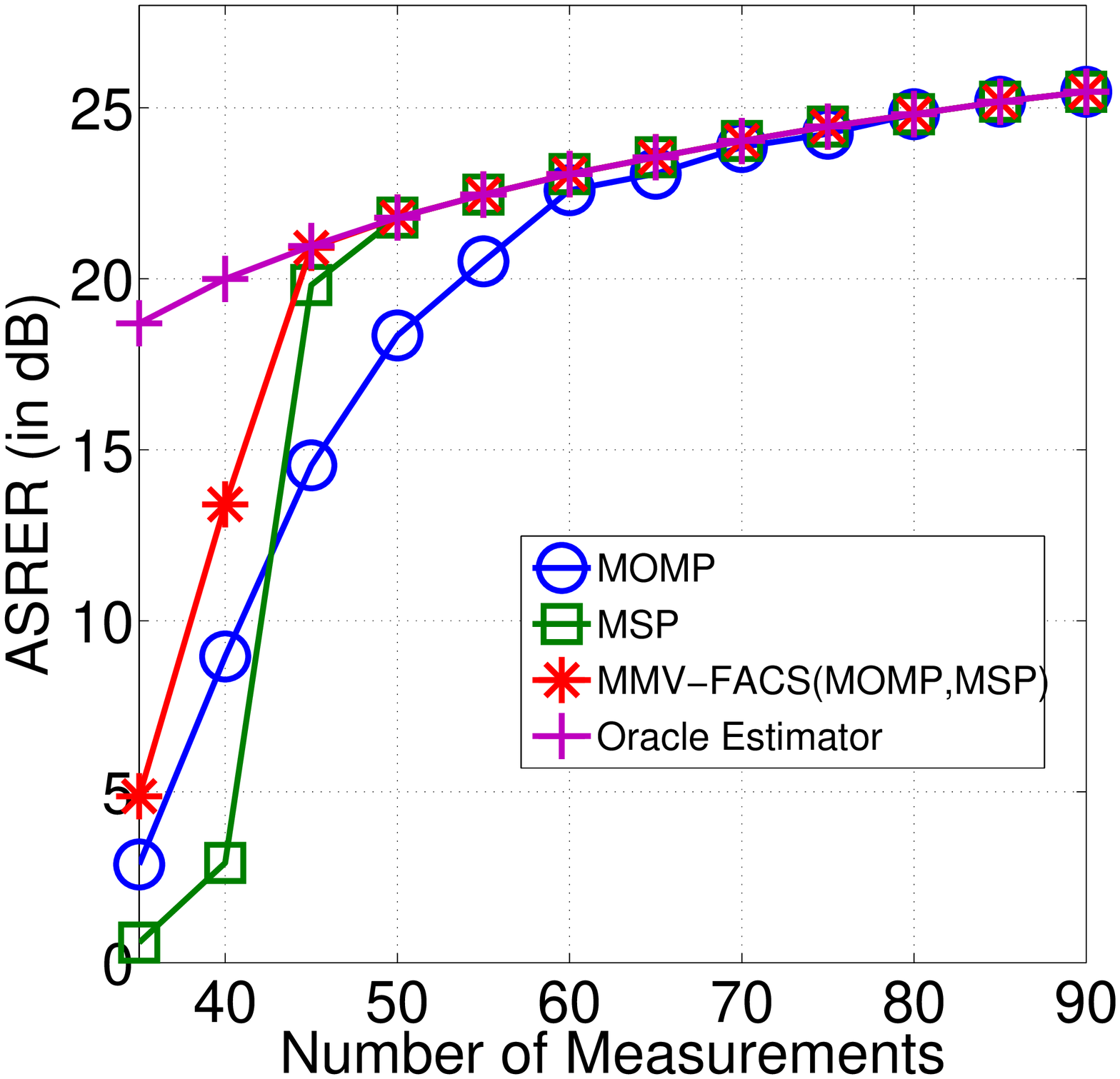}
	}
	\subfigure[]
	{
		\includegraphics[width=2.75in]{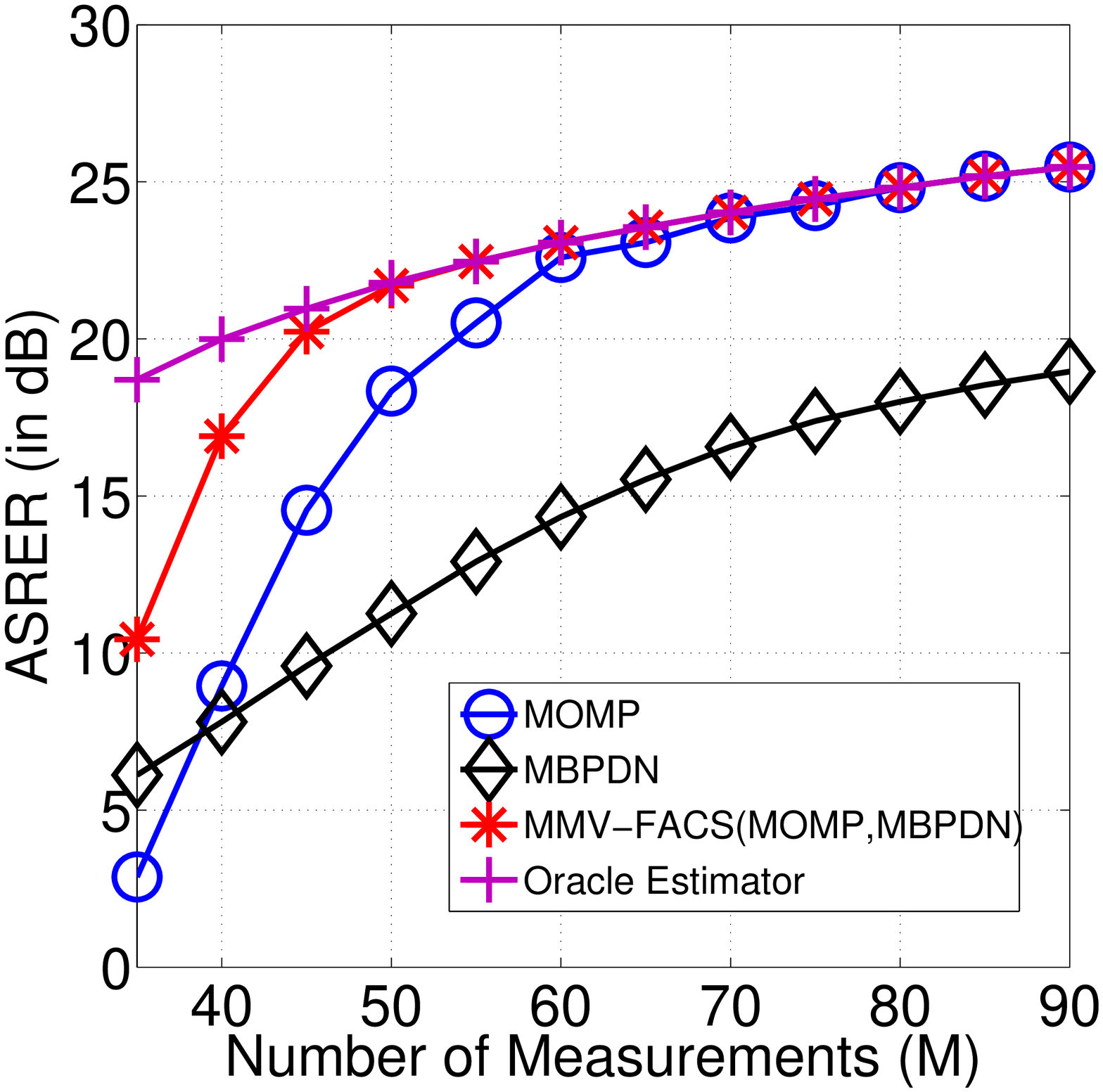}
	} \\
	\subfigure[]
	{
		\includegraphics[width=2.75in]{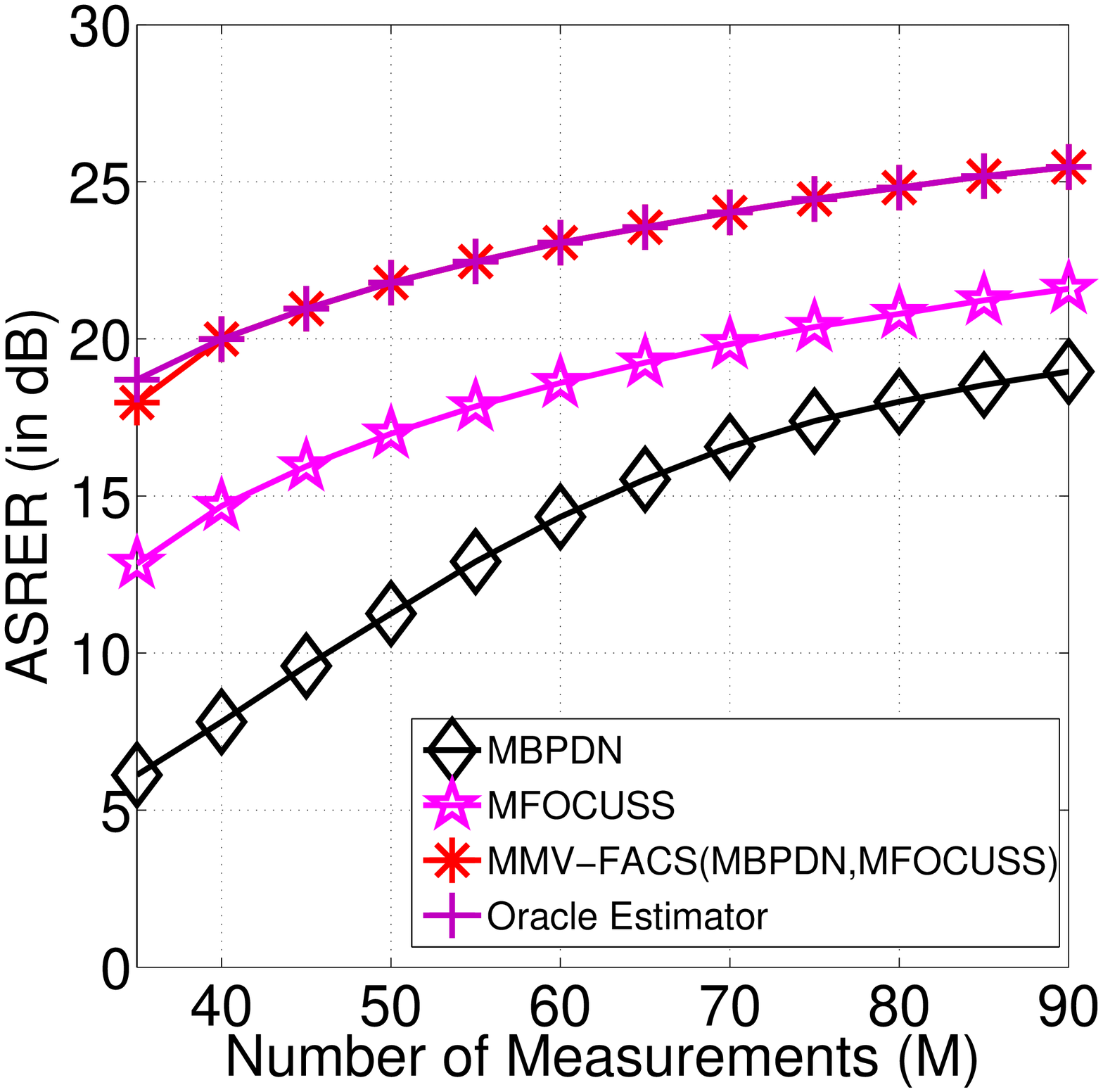}
	}
	\subfigure[]
	{
		\includegraphics[width=2.75in]{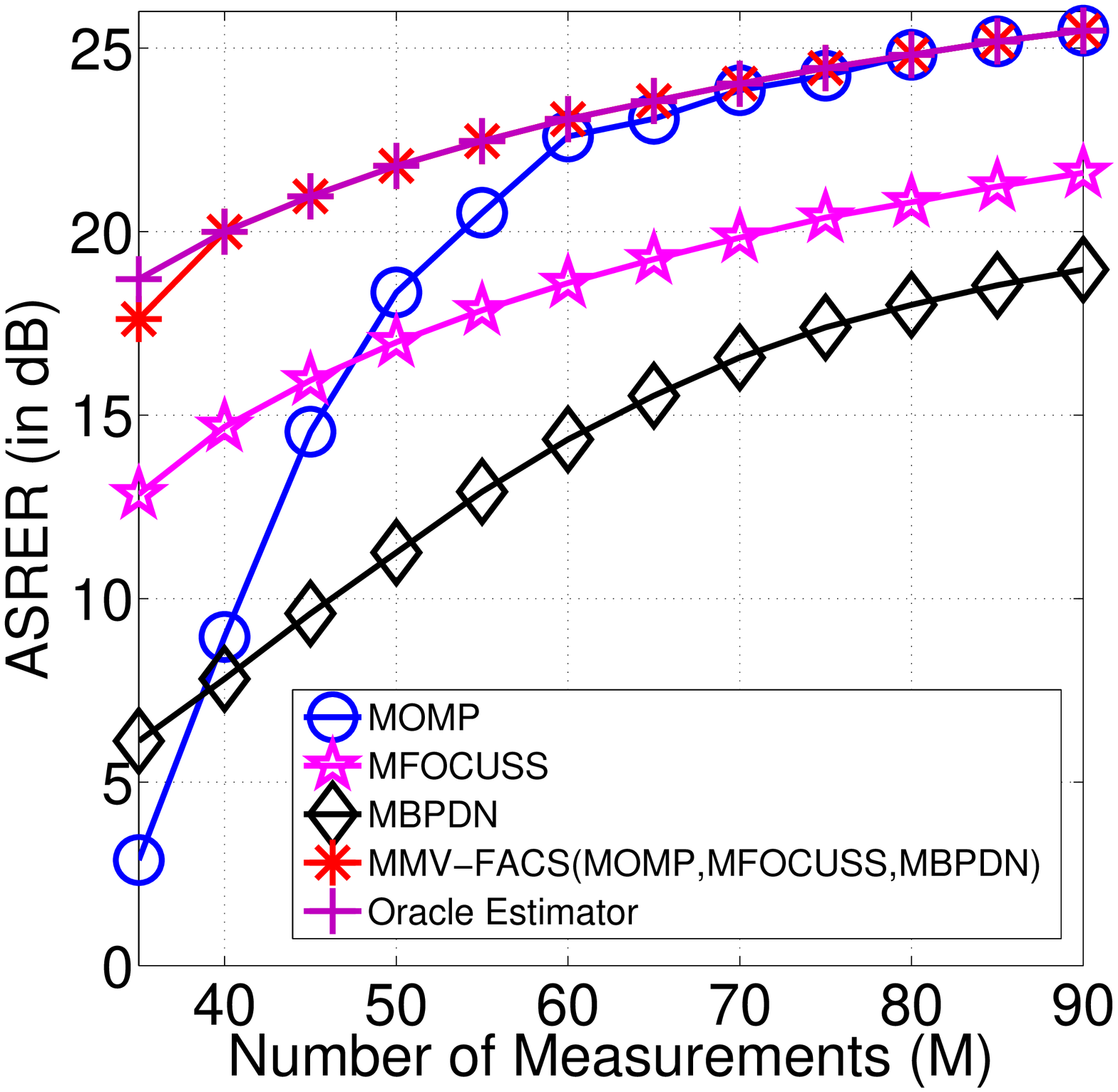}
	}
	\caption[Performance of MMV-FACS for Rademacher sparse signal matrices, varying the number of measurements ($M$)]{Performance of MMV-FACS, averaged over $1000$ trials, for Rademacher sparse signal matrices with $\text{{SMNR}}=20$~dB. Sparse signal dimension $N=500$, sparsity level $K=20$ and number of measurement vectors $L = 20$.}
	\label{Fig:two}
\end{figure}

\begin{figure}[!htb]
	\centering
	\subfigure[]
	{
		\includegraphics[width=2.75in]{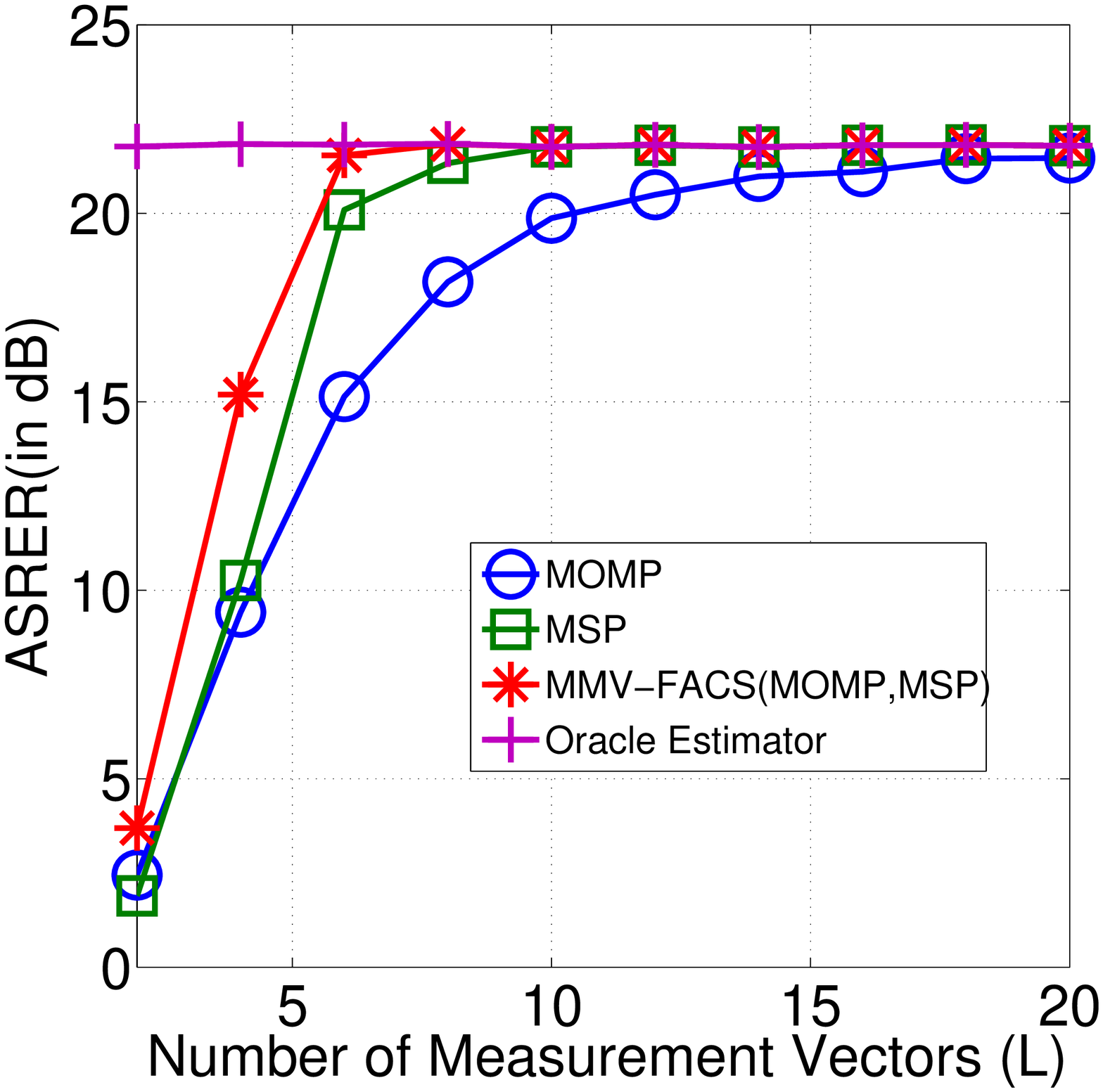}
	}
	\subfigure[]
	{
		\includegraphics[width=2.75in]{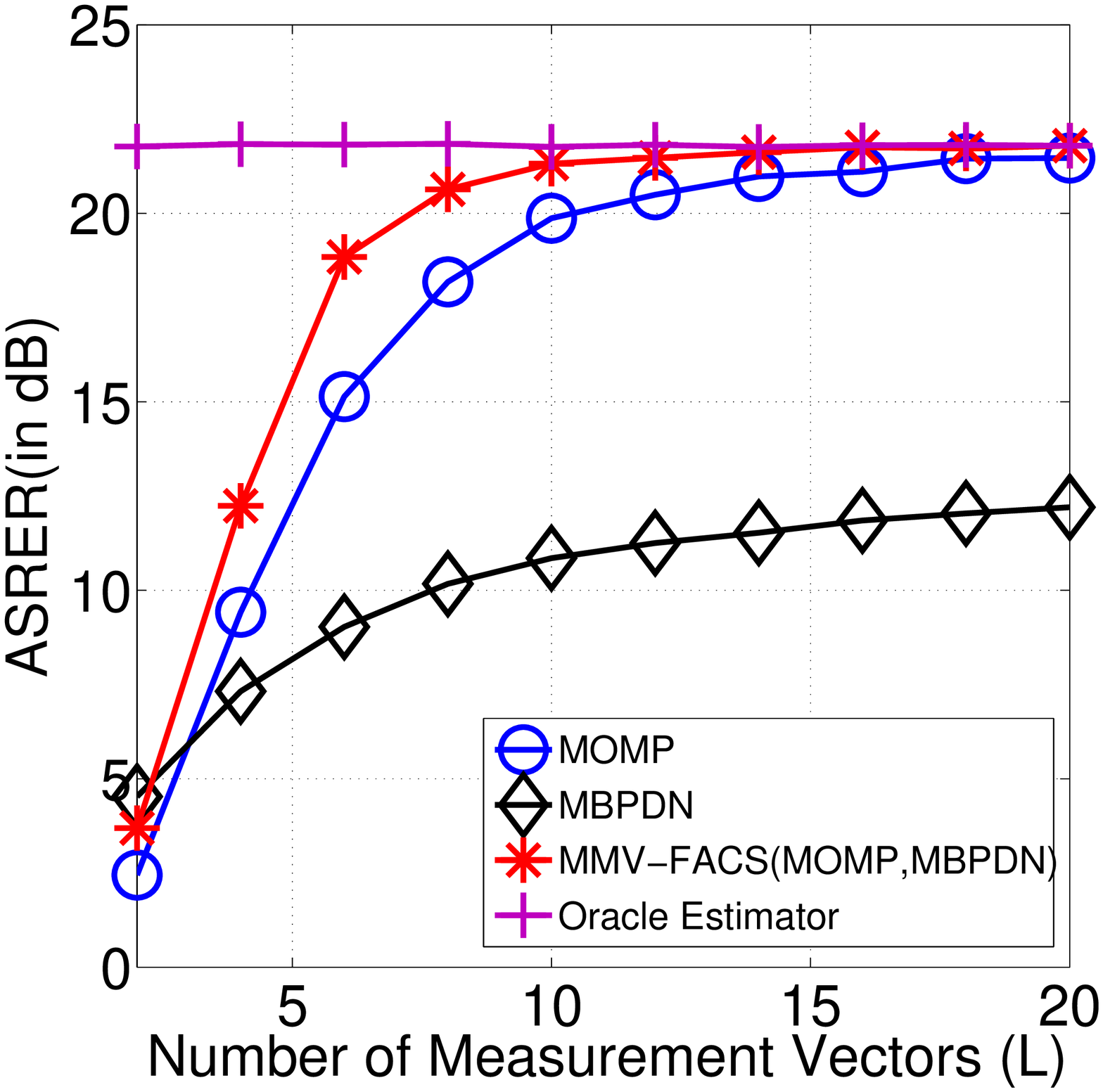}
	} 
	\caption[Performance of MMV-FACS for Gaussian sparse signal matrices, varying the number of measurement vectors ($L$)]{Performance of MMV-FACS, averaged over $1000$ trials, for Gaussian sparse signal matrices with $\text{{SMNR}}=20$~dB. Sparse signal dimension $N = 500$, sparsity level $K = 20$, and number of measurements $M=50$.}	
	\label{Fig:three}
\end{figure}
\begin{figure}[!htb]
	\centering
	\subfigure[]
	{\includegraphics[width=2.75in]{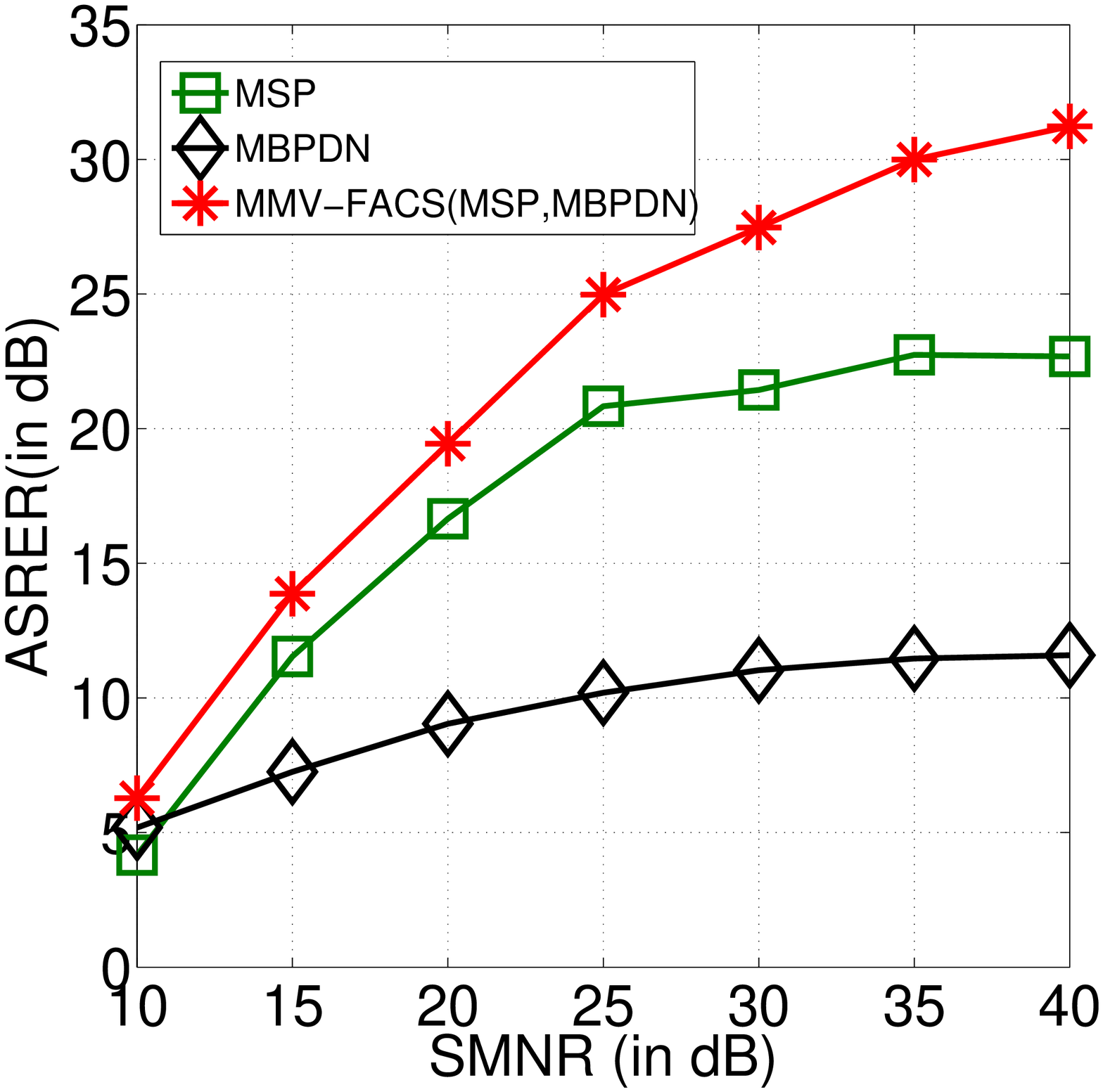}
	}
	\subfigure[]
	{
		\includegraphics[width=2.75in]{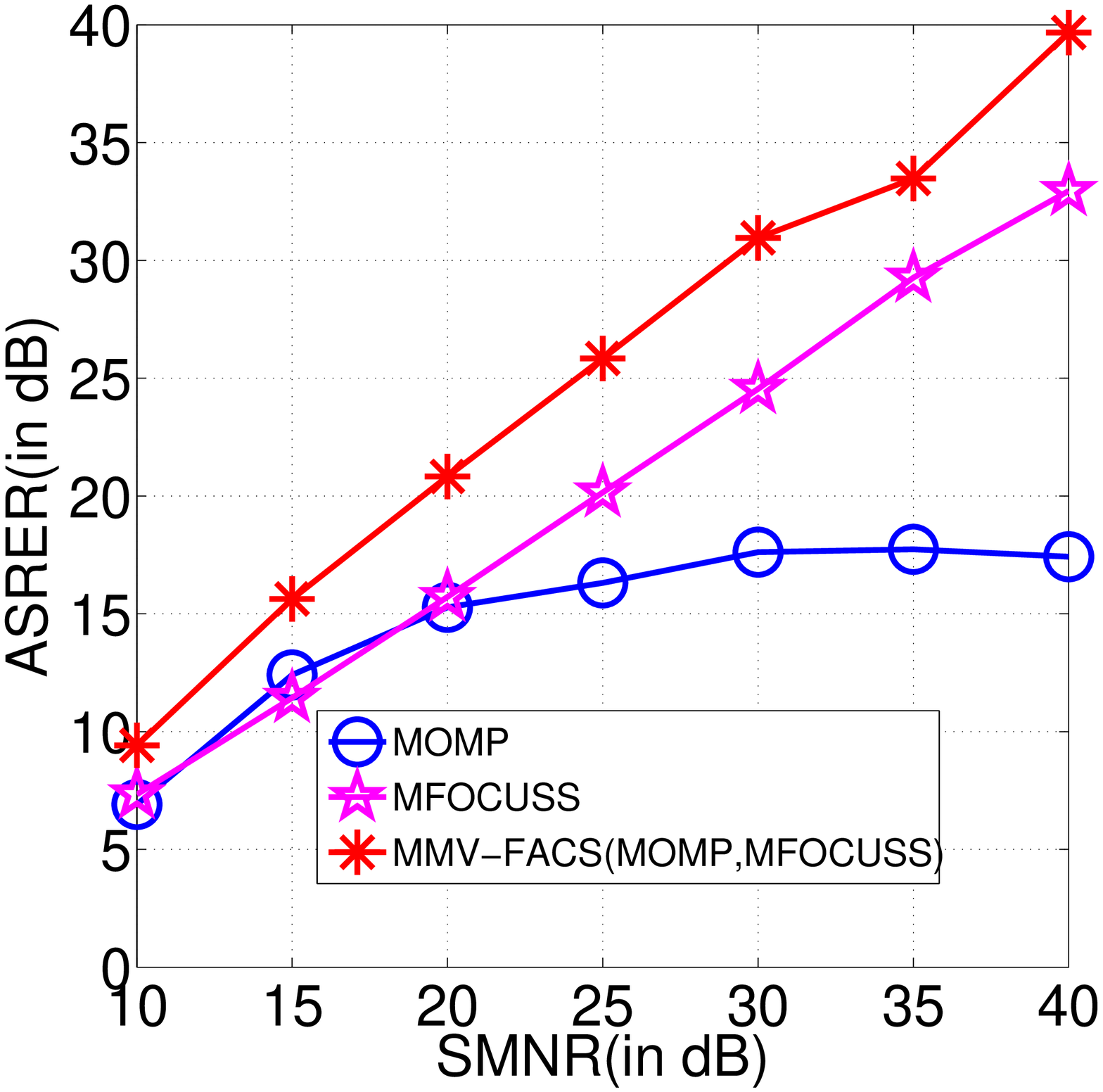}
	} \\
	\subfigure[]
	{
		\includegraphics[width=2.75in]{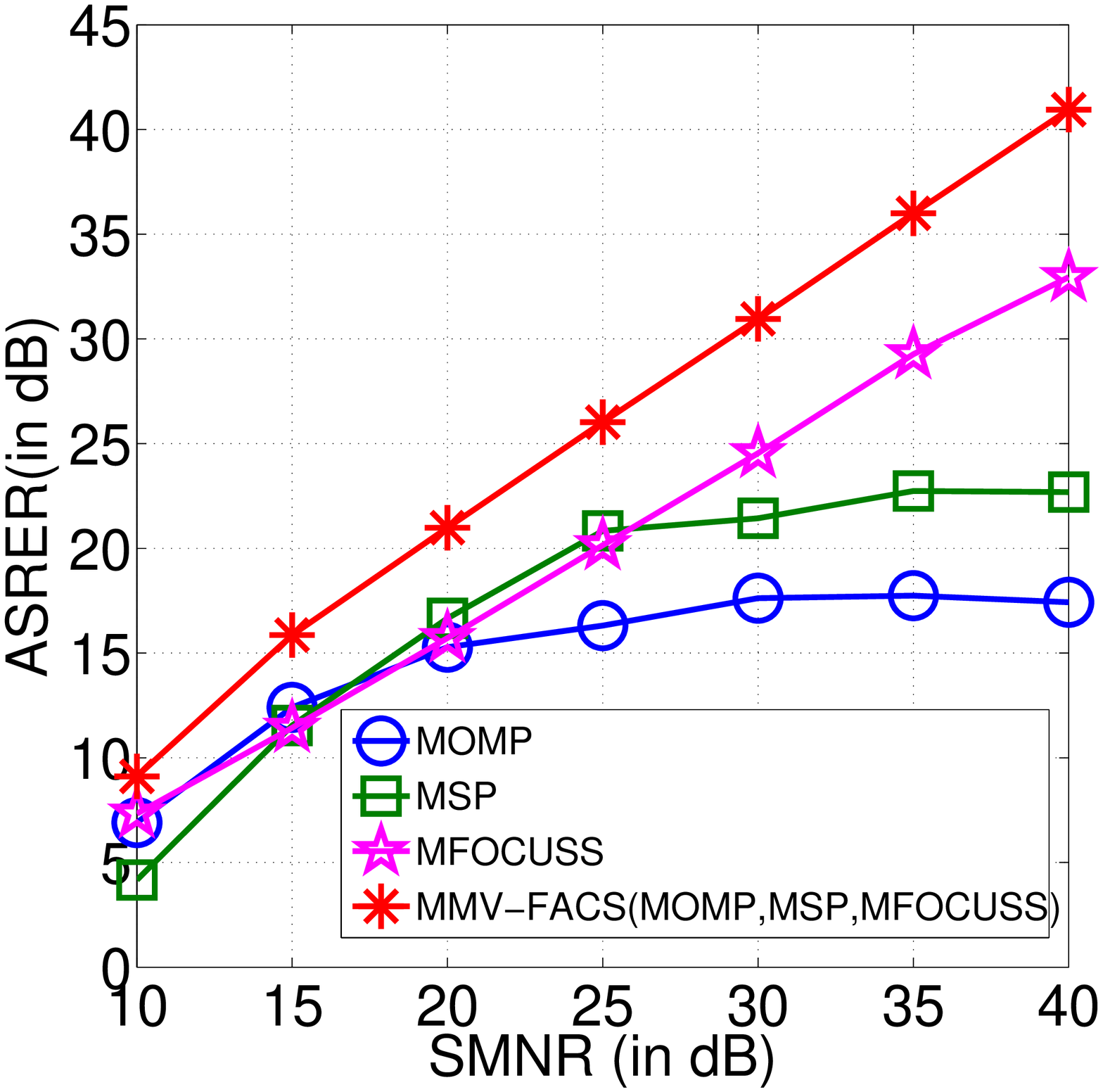}
	}
	\subfigure[]
	{
		\includegraphics[width=2.75in]{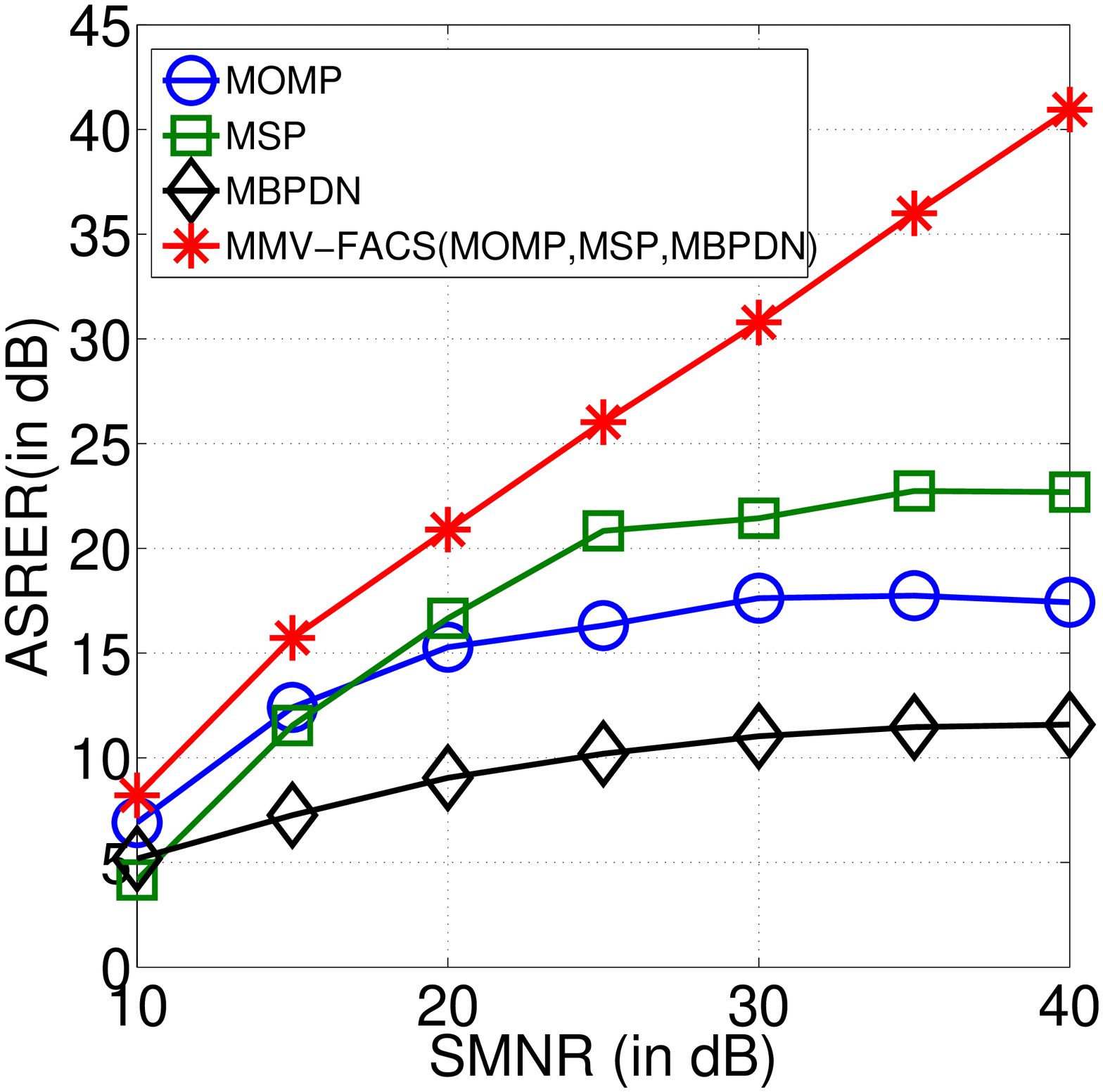}
	}
	\caption[Performance of MMV-FACS for Gaussian sparse signal matrices, varying SMNR]{Performance of MMV-FACS, averaged over $1000$ trials, for Gaussian sparse signal matrices with $\text{{SMNR}}=20$~dB. Sparse signal dimension $N = 500$, sparsity level $K = 20$, and number of measurements $M=45$, and number of measurement vectors $L=10$.}
	\label{Fig:four}
\end{figure}

We used M-OMP, M-SP, M-BPDN \cite{Berg2009Probing}, and M-FOCUSS \cite{Cotter2005SparseSolution} as the participating algorithms in {MMV-FACS}. The software code for M-BPDN was taken from SPGL1 software package~\cite{SPGL12007}. Since M-FOCUSS and M-BPDN algorithms may not yield an exact $K$-sparse solution, we estimate the support-set as the indices of the $K$ rows with largest $\ell_2$ norm.  We fixed the sparse signal dimension $N=500$ and sparsity level $K=20$ in the simulation the result were calculated by averaging over $1,000$ trials ($S = 1,000$). We use an \emph{oracle estimator} for performance benchmarking. The \emph{oracle estimator} is aware of the true support-set and finds the non-zero entries of the sparse matrix by solving {LS}.

The empirical performance of {MMV} reconstruction algorithms for different values of $M$ is shown in \figurename{~\ref{Fig:one}}. The simulation parameters are $L=20$, {SMNR}$=20$ dB and \textbf{X} is chosen as Gaussian sparse signal matrix. For $M=35$, {MMV-FACS} (M-BPDN,M-FOCUSS) gave 10.67 dB and 4.27 dB improvement over M-BPDN and M-FOCUSS respectively.\\

\figurename{~\ref{Fig:two}} depicts the performance of Rademacher sparse signal matrix for different values of $M$ where we set $L=20$ and {SMNR}$=20$ dB. We again observe similar performance improvement as in the case of Gaussian sparse signal matrix. For example, for $M=35$, {MMV-FACS}(M-OMP,M-BPDN) showed 7.56 dB and 4.32 dB over M-OMP and M-BPDN respectively.

A comparison of {MMV} reconstruction techniques is shown in \figurename{~\ref{Fig:three}}for Gaussian sparse signal matrix for different values of $L$ where we set $M=50$ and {SMNR}$=20$ dB. It may be observed that {MMV-FACS} gave a significant performance improvement over the participating algorithms. Specifically, MMV-FACS(M-OMP,M-SP) improved the performance by $5.77$~dB and $4.94$~dB over M-OMP and M-SP respectively.

To show the dependency of recovery performance on {SMNR}, we conducted simulations for different values of {SMNR}. \figurename{~\ref{Fig:four}} illustrates the performance for Gaussian sparse signal matrix where $L=10$ and $M=45$. An additional {ASRER} improvement of $2.51$~dB and $2.08$~ dB were achieved as compared to M-OMP and M-FOCUSS respectively for {SMNR}$=10$~dB. This shows the robustness of {MMV-FACS} to noisy measurements.

From the above simulation results it can be seen that {MMV-FACS} improved the sparse signal recovery compared to participating algorithms.
\subsubsection{Reproducible Research}
We provide necessary Matlab codes to reproduce all the figures, publicly downloadable from \url{http://www.ece.iisc.ernet.in/~ssplab/Public/MMVFACS.tar.gz}.

\subsection{Real Compressible Signals}
To evaluate the performance of {MMV-FACS} on compressible signals and real world data, we used the data set `$\mathit{05091.dat}$' from  MIT-BIH Atrial Fibrillation Database~\cite{Goldberger2000PhysioBank}. The recording is of $10$ hours in duration, and contains two ECG signals each sampled at $250$ samples per second with $12$-bit resolution over a range of $\pm 10$ millivolts. We selected the first $250$ time points of the recording as the data set used in our experiment. We used a randomly generated Gaussian sensing matrix of size $M \times 250$, with different values of $M$ in the experiment. We assumed sparsity level $K=50$ and used  M-OMP and M-SP as the participating algorithms. The reconstruction results are shown in \figurename{~\ref{Fig:realdata}}.
\begin{figure}[!t]
	\centering
	\includegraphics[width = 0.49\textwidth]{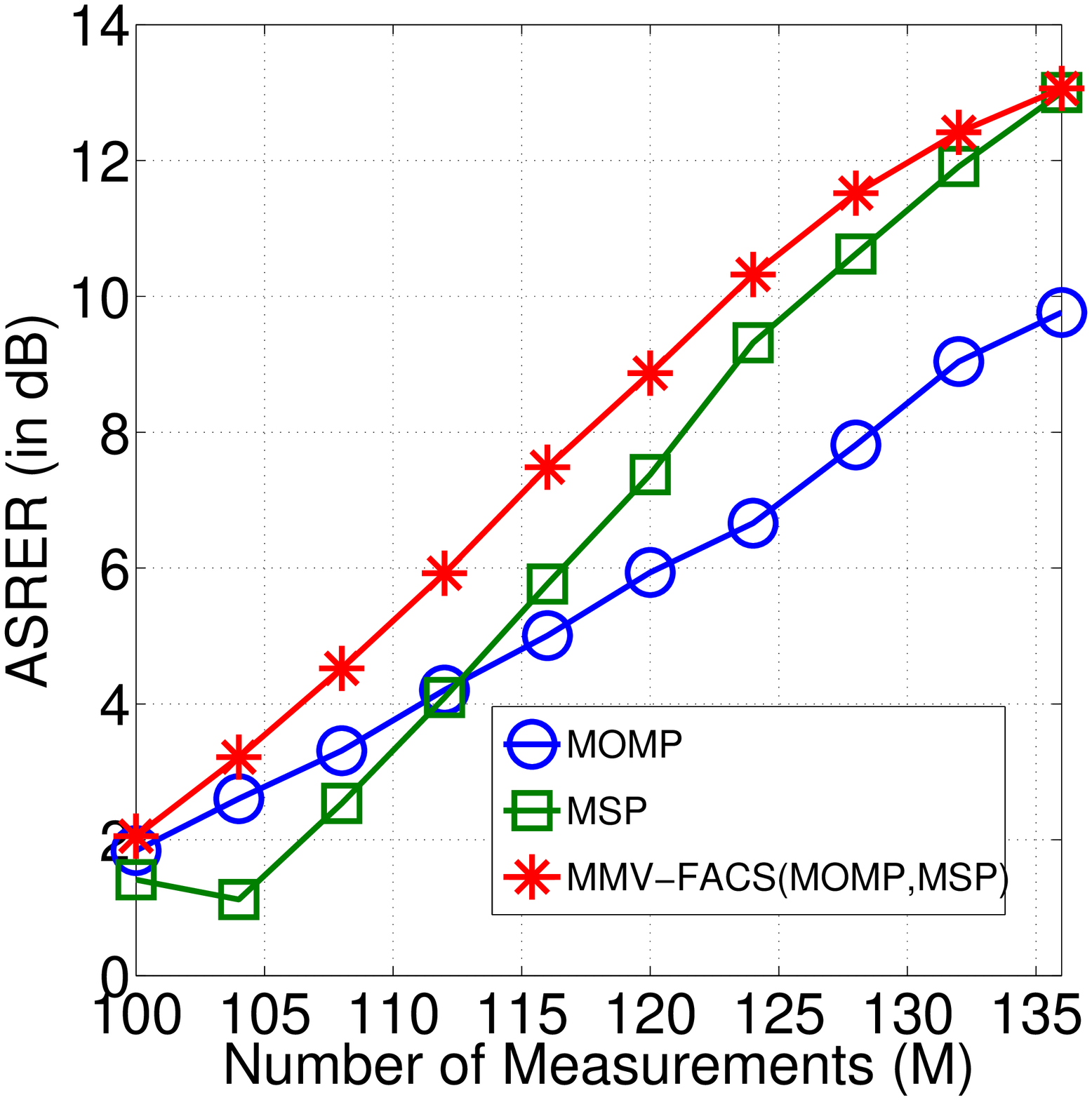}
	\caption[Performance of {MMV-FACS} for real compressible signals]{Real Compressible signals: Performance of {MMV-FACS} for 2-channel ECG signals from MIT-BIH Atrial Fibrillation Database~\cite{Goldberger2000PhysioBank}.}
	\label{Fig:realdata}
\end{figure}

Similar to synthetic signals, {MMV-FACS} shows a better {ASRER} compared to the participating algorithms M-OMP and M-SP. This demonstrates the advantage of {MMV-FACS} in real-life applications, requiring fewer measurement samples to yield an approximate reconstruction. 

\section{Conclusions}
In this paper, we extended {FACS} to the {MMV} case and showed that {MMV-FACS} improves sparse signal matrix reconstruction. Using RIP, we theoretically analysed the proposed scheme and derived sufficient conditions for the performance improvement over the participating algorithm. Using Monte-Carlo simulations, we showed the performance improvement of the proposed scheme over the participating methods. Though this paper discusses only the extension of {FACS} for {MMV} problem, a similar approach can be used to extend the other fusion algorithms developed by Ambat {\em et al.}\ ~\cite{Ambat2012FuGP_EUSIPCO,Ambat2014CoMACS_IEEE_TSP,Ambat2014ProgresiveFusion}.
\begin{appendix}
	
	\section*{Proof of Lemma~\ref{lem1_MMVFACS}}
	\label{app:appendix1}
	\noindent The proof is inspired by Proposition~3.5 by Needell and Tropp~\cite{Needell2009CoSaMP}.\\
	Define set S as the convex combination of all matrices which are $R+K$ sparse and have unit Frobenius norm. 
	\begin{equation*}
	S=conv\Biggl\{ \mathbf{X}: \left \|\mathbf{X}\right \|_0 \leq R+K, \left \|\mathbf{X}\right \|_F=1 \Biggr\} 
	\end{equation*}
	Using the relation $\left \|\mathbf{AX}\right \|_F \leq \sqrt{1+\delta _{R+K}} \left \|\mathbf{X}\right \|_F$, we get,
	\begin{align*}
	\esupa_{\mathbf{X} \in S} \left \|\mathbf{AX}\right \|_F \leq & \sqrt{1+\delta _{R+K}}. 
	\end {align*}
	
	\noindent Define  
	\begin{equation*}
	Q=\Biggl\{ \mathbf{X}: \left \|\mathbf{X}\right \|_F + \frac{1}{\sqrt{R+K}} \left \|\mathbf{X}\right \|_{2,1} \leq 1 \Biggr\}. 
	\end{equation*}
	
	\noindent The lemma essentially claims that 
	\begin{equation*}
	\esupa_{\mathbf{X} \in Q} \left \|\mathbf{AX}\right \|_F \leq \esupa_{\mathbf{X} \in S} \left \|\mathbf{AX}\right \|_F. 
	\end {equation*}
	
	\noindent  To prove this, it is sufficient to ensure that $Q \subset S$.
	
	\noindent  Consider a matrix $\mathbf{X} \in Q$. Partition the support of $\mathbf{X}$ into sets of size $R+K$. Let set $\mathcal{I}_0$ contain the indices of the $R+K$ rows of $\mathbf{X}$ which have largest row $\ell_2$-norm, breaking ties lexicographically. Let set $\mathcal{I}_1$ contain the indices of the next largest (row $\ell_2$-norm) $R+K$ rows and so on. The final block $\mathcal{I}_J$ may have lesser than $R+K$ components. This partition gives rise to the following decomposition: 
	\begin{equation*}
	\mathbf{X}=\mathbf{X}|_{I_0}+\sum_{j=1}^{J}\mathbf{X}|_{I_j} = \lambda _0 \mathbf{Y}_0 + \sum_{j=1}^{J} \lambda _j \mathbf{Y}_j,
	\end{equation*}
	where  $\displaystyle \lambda _j =\left \|\mathbf{X}|_{I_j}\right \|_F \quad \text{and} \quad \mathbf{Y}_j= \lambda_j^{-1} \mathbf{X}|_{I_j}.$
	
	\noindent By construction each matrix $\mathbf{Y}_j$ belongs to $S$ because it is $R+K$ sparse and has unit Frobenius norm. We will show that $\sum_{j} \lambda _j \leq 1$. This implies that $\mathbf{X}$ can be written as a convex combination of matrices from the set $S$. As a result $\mathbf{X} \in S$. Therefore, $Q\subset S$. 
	
	\noindent Fix some $j$ in the range $\{1,2, \ldots , J\}$. Then, $\mathcal{I}_j$ contains at most $R+K$ elements and $\mathcal{I}_{j-1}$ contains exactly $R+K$ elements. Therefore,
	\begin{equation*}
	\lambda _j = \left \|\mathbf{X}|_{I_j}\right \|_F \leq \sqrt{R+K} \left \|\mathbf{X}|_{I_j}\right \|_{2,\infty} \leq \sqrt{R+K} \cdot \frac{1}{R+K} \left \|\mathbf{X}|_{I_{j-1}}\right \|_{2,1}.
	\end{equation*}
	The last inequality holds because the row $\ell_2$-norm of $\mathbf{X}$ on the set $I_{j-1}$ dominates its largest row $\ell_2$-norm in $\mathcal{I}_j$. Summing these relations, we get
	\begin{equation*}
	\sum _{j=1}^{J} \lambda _j \leq \frac{1}{\sqrt{R+K}} \sum _{j=1}^J \left \|\mathbf{X}|_{I_{j-1}}\right \|_{2,1} \leq \frac{1}{\sqrt{R+K}} \left \|\mathbf{X}\right \|_{2,1}. 
	\end{equation*}
	
	\noindent Also, we have $\lambda _0 = \left \|\mathbf{X}|_{I_0}\right \|_F \leq \left \|\mathbf{X}\right \|_F$. Since  $\mathbf{X} \in Q$, we conclude that
	\begin{equation*}
	\sum _{j=0}^{J} \lambda _j \leq \Biggl[ \left \|\mathbf{X}\right \|_F + \frac{1}{\sqrt{R+K}} \left \|\mathbf{X}\right \|_{2,1}\Biggr] \leq 1. \tag*{$\blacksquare$} 
	\end{equation*}
	
	\section*{Proof of Lemma~\ref{lem2_MMVFACS}}
	\label{app:appendix2}
	Let $\mathbf{y}^{(i)}$ denote the $i^\text{th}$ column of matrix $\mathbf{Y}$ and $\mathbf{r}^{(i)}$ denote the $i^\text{th}$ column of matrix $\mathbf{R}$, $i=1,2,\ldots L$. Then we have from Lemma~2 of \cite{WeiDai2009SubspacePursuit}
	\begin{align*}
	\left(1-\frac{\delta _{|\mathcal{T}_1|+|\mathcal{T}_2|}}{1-\delta _{|\mathcal{T}_1|+|\mathcal{T}_2|}} \right) \left \|\mathbf{y}^{(i)}\right \|_2 & \leq \left \|\mathbf{r}^{(i)}\right \|_2  \leq \left \|\mathbf{y}^{(i)}\right \|_2.
	\end{align*}
	
	\noindent  Summing the above relation, we obtain
	\begin{align*}
	\left(1-\frac{\delta _{|\mathcal{T}_1|+|\mathcal{T}_2|}}{1-\delta _{|\mathcal{T}_1|+|\mathcal{T}_2|}} \right) \sum _{i=1}^{L}\left \|\mathbf{y}^{(i)}\right \|_2^2 & \leq \sum _{i=1}^L\left \|\mathbf{r}^{(i)}\right \|_2^2 \leq \sum _{i=1}^L\left \|\mathbf{y}^{(i)}\right \|_2^2. 
	\end{align*}
	
	\noindent Equivalently, we have,
	\begin{align*}
	\left(1-\frac{\delta _{|\mathcal{T}_1|+|\mathcal{T}_2|}}{1-\delta _{|\mathcal{T}_1|+|\mathcal{T}_2|}} \right) \left \|\mathbf{Y}\right \|_F & \leq \left \|\mathbf{R}\right \|_F \leq \left \|\mathbf{Y}\right \|_F. \tag*{$\blacksquare$}
	\end{align*}
\end{appendix}

\bibliographystyle{IEEEbib}
\bibliography{IEEEabrv,References}

\begin{thebibliography}{10}

\bibitem{Cotter2005SparseSolution}
S.~F. Cotter, B.~D. Rao, K.~Engan, and K.~Kreutz-Delgado,
\newblock ``{Sparse Solutions to Linear Inverse Problems with Multiple
  Measurement Vectors},''
\newblock {\em {IEEE} Trans. Signal Process.}, vol. 53, no. 7, pp. 2477--2488,
  Jul. 2005.

\bibitem{Gorodnitsky1995Neuromagnetic}
I.~F. Gorodnitsky, J.~S. George, and B.~D. Rao,
\newblock ``{Neuromagnetic Source Imaging with FOCUSS : A Recursive Weighted
  Minimum Norm Algorithm},''
\newblock {\em J. Electroencephalog. Clinical Neurophysiol.}, vol. 95, no. 4,
  pp. 231--251, Oct. 1995.

\bibitem{Gorodnitsky1997SparseSignal}
I.~F. Gorodnitsky and B.~D. Rao,
\newblock ``{Sparse Signal Reconstruction from Limited Data using FOCUSS: A
  Re-weighted Minimum Norm Algorithm},''
\newblock {\em {IEEE} Trans. Signal Process.}, vol. 45, no. 3, pp. 600--616,
  Mar. 1997.

\bibitem{Malioutov2005ASparseSignal}
D.~Malioutov, M.~Cetin, and A.~S. Willsky,
\newblock ``{A Sparse Signal Reconstruction Perspective for Source Localization
  with Sensor Arrays},''
\newblock {\em {IEEE} Trans. Signal Process.}, vol. 53, no. 8, pp. 3010--3022,
  Aug. 2005.

\bibitem{Stoica1997Introduction}
Petre Stoica and Randolph~L. Moses,
\newblock {\em {Introduction to Spectral Analysis}},
\newblock Upper Saddle River, N.J. Prentice Hall, 1997.

\bibitem{Cotter2002SparseChannel}
S.~F. Cotter and B.~D. Rao,
\newblock ``{Sparse Channel Estimation via Matching Pursuit with Application to
  Equalization},''
\newblock {\em {IEEE} Trans. Commun.}, vol. 50, no. 3, pp. 374--377, Mar. 2002.

\bibitem{Tropp2006Algorithms}
Joel~A. Tropp,
\newblock ``{Algorithms for simultaneous sparse approximation. Part II: Convex
  relaxation},''
\newblock {\em Signal Processing}, vol. 86, no. 3, pp. 589 -- 602, 2006,
\newblock Sparse Approximations in Signal and Image Processing Sparse
  Approximations in Signal and Image Processing.

\bibitem{Wipf2007AnEmpirical}
D.~P. Wipf and B.~D. Rao,
\newblock ``{An Empirical Bayesian Strategy for Solving the Simultaneous Sparse
  Approximation Problem},''
\newblock {\em {IEEE} Trans. Signal Process.}, vol. 55, no. 7, pp. 3704--3716,
  July 2007.

\bibitem{Zhang11SparseSignal}
Zhilin Zhang and B.~D. Rao,
\newblock ``Sparse signal recovery with temporally correlated source vectors
  using sparse bayesian learning,''
\newblock {\em {IEEE} J. Sel. Topics Signal Process.}, vol. 5, no. 5, pp.
  912--926, Sept 2011.

\bibitem{Elad2009APlurality}
M.~Elad and I.~Yavneh,
\newblock ``{A} {P}lurality of {S}parse {R}epresentations is {B}etter {T}han
  the {S}parsest {O}ne {A}lone,''
\newblock {\em {IEEE} Trans. Inf. Theory}, vol. 55, no. 10, pp. 4701 --4714,
  {Oct.} 2009.

\bibitem{facs}
S.~K. Ambat, S.~Chatterjee, and {K.V.S. Hari},
\newblock ``Fusion of algorithms for compressed sensing,''
\newblock {\em {IEEE} Trans. Signal Process.}, vol. 61, no. 14, pp. 3699--3704,
  July 2013.

\bibitem{Ambat2013ICASSP}
Sooraj~K. Ambat, S.~Chatterjee, and {K.V.S. Hari},
\newblock ``Fusion of {A}lgorithms for {C}ompressed {S}ensing,''
\newblock in {\em Acoustics, Speech and Signal Processing (ICASSP), 2013 IEEE
  International Conference on}, May 2013, pp. 5860--5864.

\bibitem{Ambat2014ACommitteMachine}
Sooraj~K. Ambat, S.~Chatterjee, and {K.V.S. Hari},
\newblock ``A {C}ommittee {M}achine {A}pproach for {C}ompressed {S}ensing
  {S}ignal {R}econstruction,''
\newblock {\em {IEEE} Trans. Signal Process.}, Accepted 2014.

\bibitem{Ambat2014ProgressiveFusion}
Sooraj~K. Ambat, Saikat Chatterjee, and {K.V.S. Hari},
\newblock ``Progressive {F}usion of {R}econstruction {A}lgorithms for {L}ow
  {L}atency {A}pplications in {C}ompressed {S}ensing,''
\newblock {\em Signal Processing}, vol. 97, no. 0, pp. 146 -- 151, 2014.

\bibitem{Ambat2012FuGP_EUSIPCO}
{Sooraj K.~Ambat}, {Saikat Chatterjee}, and {K.V.S.~Hari},
\newblock ``{F}usion of {G}reedy {P}ursuits for {C}ompressed {S}ensing {S}ignal
  {R}econstruction,''
\newblock in {\em 20th European Signal Processing Conference 2012 (EUSIPCO
  2012)}, Bucharest, Romania, {A}ug. 2012.

\bibitem{Needell2009CoSaMP}
D.~Needell and J.~A. Tropp,
\newblock ``{C}o{S}a{MP}: {I}terative {S}ignal {R}ecovery from {I}ncomplete and
  {I}naccurate {S}amples,''
\newblock {\em Appl. Comput. Harmon. Anal.}, vol. 26, no. 3, pp. 301 -- 321,
  2009.

\bibitem{Remi2008AtomsOfAll}
Remi Gribonval, Holger Rauhut, Karin Schnass, and Pierre Vandergheynst,
\newblock ``{Atoms of All Channels, Unite! Average Case Analysis of
  Multi-Channel Sparse Recovery using Greedy Algorithms},''
\newblock {\em Journal of {F}ourier {A}nalysis and {A}pplications}, vol. 14,
  no. 5, pp. 655--687, 2008.

\bibitem{Berg2009Probing}
E.~van~den Berg and M.~Friedlander,
\newblock ``{Probing the Pareto Frontier for Basis Pursuit Solutions},''
\newblock {\em SIAM Journal on Scientific Computing}, vol. 31, no. 2, pp.
  890--912, 2009.

\bibitem{SPGL12007}
E.~van~den Berg and M.~P. Friedlander,
\newblock ``{SPGL1: A Solver for Large-scale Sparse Reconstruction},'' June
  2007,
\newblock http://www.cs.ubc.ca/labs/scl/spgl1.

\bibitem{Goldberger2000PhysioBank}
A.~L. Goldberger, L.~A.~N. Amaral, L.~Glass, J.~M. Hausdorff, P.~Ch. Ivanov,
  R.~G. Mark, J.~E. Mietus, G.~B. Moody, C.-K. Peng, and H.~E. Stanley,
\newblock ``{PhysioBank, PhysioToolkit, and PhysioNet: Components of a New
  Research Resource for Complex Physiologic Signals},''
\newblock {\em Circulation}, vol. 101, no. 23, pp. e215--e220, 2000 (June 13),
\newblock Circulation Electronic Pages:
  http://circ.ahajournals.org/cgi/content/full/101/23/e215 PMID:1085218; doi:
  10.1161/01.CIR.101.23.e215.

\bibitem{Ambat2014CoMACS_IEEE_TSP}
{Sooraj K.~Ambat}, {Saikat Chatterjee}, and {K.V.S.~Hari},
\newblock ``{A Committee Machine Approach for Compressed Sensing Signal
  Reconstruction},''
\newblock {\em {IEEE} Trans. Signal Process.}, vol. 62, no. 7, pp. 1705--1717,
  Apr. 2014.

\bibitem{Ambat2014ProgresiveFusion}
{Sooraj K.~Ambat}, {Saikat Chatterjee}, and {K.V.S.~Hari},
\newblock ``{Progressive Fusion of Reconstruction Algorithms for Low Latency
  Applications in Compressed Sensing},''
\newblock {\em Signal Processing}, vol. 97, no. 0, pp. 146 -- 151, Apr. 2014.

\bibitem{WeiDai2009SubspacePursuit}
Wei Dai and O.~Milenkovic,
\newblock ``{S}ubspace {P}ursuit for {C}ompressive {S}ensing {S}ignal
  {R}econstruction,''
\newblock {\em {IEEE} Trans. Inf. Theory}, vol. 55, no. 5, pp. 2230 --2249,
  {M}ay 2009.

\end{thebibliography}
\end{document}